\newtheorem{theorem}{Theorem}
\newtheorem{proposition}{Proposition}
\newtheorem{lemma}[theorem]{Lemma}
\newtheorem{corollary}{Corollary}
\newtheorem{remark}{Remark}
\theoremstyle{definition}
\newtheorem{definition}{Definition}
\newcommand{\C}{\mathcal{C}}
\renewcommand{\v}{\boldsymbol{v}}
\newcommand{\wt}{\textrm{wt}}
\newcommand{\x}{\boldsymbol{x}}
\renewcommand{\L}{\mathcal{L}}
\newcommand{\y}{\boldsymbol{y}}
\newcommand{\z}{\boldsymbol{z}}
\newcommand{\0}{\boldsymbol{0}}
\newcommand{\1}{\boldsymbol{1}}
\renewcommand{\epsilon}{\varepsilon}
\newcommand{\E}{\mathbf{E}}
\newcommand{\s}{\mathrm{supp}}
\definecolor{darkgray}{RGB}{64,64,64}
\definecolor{litegray}{RGB}{192,192,192}
\tikzstyle{block}=[draw, rectangle, minimum height=1cm, text width=1.3cm, text centered, draw=darkgray, font=\small]
\tikzstyle{block_medium}=[draw, rectangle, minimum height=1.5cm, text width=2cm, text centered, draw=darkgray, font=\small]
\tikzstyle{block_large}=[draw, rectangle, minimum height=2cm, text width=2cm, text centered, draw=darkgray, font=\small]
\tikzstyle{line} = [draw, -latex]
\newcommand\myshade{85}
\colorlet{mylinkcolor}{violet}
\colorlet{mycitecolor}{YellowOrange}
\colorlet{myurlcolor}{Aquamarine}
\title{Two-stage coding over the Z-channel}
\author{
 \IEEEauthorblockN{Lebedev Alexey, Lebedev Vladimir, and Polyanskii Nikita}
\thanks{A. Lebedev and V. Lebedev were supported by the Russian Foundation for Basic Research (RFBR) under Grant No. 19-01-00364 and by the RFBR and the Japan Society for the Promotion of Science (JSPS) under Grant No. 20-51-50007. N. Polyanskii's research was conducted in part during June 2020 - December 2021 with the Technical University of Munich and the Skolkovo Institute of Science and Technology. His work was supported by the German Research Foundation (Deutsche Forschungsgemeinschaft, DFG) under Grant No. WA3907/1-1 and the Russian Foundation for Basic Research (RFBR) under Grant No.~\mbox{20-01-00559}.}
\thanks{A.~Lebedev and V.~Lebedev are with the  Institute for Information Transmission Problems, Russian Academy of Sciences, Russia.  N.~Polyanskii is with the IOTA Foundation, Germany.}
}
\newif\ifcomment
\pgfplotsset{compat = 1.15}
\begin{document}

\maketitle
\begin{abstract}
    In this paper, we discuss two-stage encoding algorithms capable of correcting a fraction of asymmetric errors. Suppose that the encoder transmits $n$ binary symbols $(x_1,\ldots,x_n)$ one-by-one over the Z-channel, in which a 1 is received only if a 1 is transmitted. At some designated moment, say $n_1$, the encoder uses noiseless feedback  and adjusts further encoding strategy based on the partial output of the channel $(y_1,\ldots,y_{n_1})$. The goal is to transmit error-free as much information as possible under the assumption that the total number of errors inflicted by the Z-channel is limited by $\tau n$, $0<\tau<1$.  We propose an encoding strategy that uses a list-decodable code at the first stage and a high-error low-rate code at the second stage. This strategy and our converse result yield that there is a sharp transition at $\tau=\max\limits_{0<w<1}\frac{w + w^3}{1+4w^3}\approx 0.44$ from positive rate to zero rate for two-stage encoding strategies. As side results,  we derive bounds on the size of list-decodable codes for the Z-channel and prove that for a fraction $1/4+\epsilon$ of asymmetric errors, an error-correcting code contains at most  $O(\epsilon^{-3/2})$ codewords.
\end{abstract}
\section{Introduction}
 The Z-channel 
 is of asymmetric nature because it permits an error $1\to 0$, whereas it prohibits an error $0\to 1$. The problem of finding encoding strategies for the Z-channel with one use of the noiseless feedback is addressed in this paper. We consider the combinatorial setting in which we limit the maximal number of errors inflicted by the channel by $\tau n$, where $\tau$ is a real number and $n$ denotes the number of channel uses. 	We emphasize that our combinatorial model is different from the probabilistic model, in which a transmitted symbol $1$ is flipped with probability $p$ and a symbol $0$ is always received without error. Recall that the capacity of the Z-channel is $C_{Z}=\log_2(1+p^{p/(1-p)}(1-p))$. We refer the reader to the paper~\cite{tallini2008feedback} dealing with the probabilistic setting,  where feedback encoding schemes that achieve the Z-channel capacity are presented.
\subsection{Related work}
 We briefly review the combinatorial coding theory literature relevant to our research. Without feedback, codes correcting asymmetric errors have been discussed in numerous papers~\cite{borden1983low,kim1959single,varshamov1965theory,klove1981upper,klove1981error,fu2003new,bose1993asymmetric,zhang2019construction}. In particular, it is known~\cite{bassalygo1965new,borden1983low} that the asymptotic rate of codes correcting a fraction of asymmetric errors is equal to the asymptotic rate of codes correcting the same fraction of symmetric errors. The Plotkin bound~\cite{plotkin1960binary} implies that the cardinality of codes  correcting a fraction $(1/4+\epsilon)$ of symmetric errors is bounded by $1+1/(4\epsilon)$ and, thus, the asymptotic rate is zero. Therefore, the asymptotic rate of codes correcting a fraction $(1/4+\epsilon)$ of asymmetric errors is also zero. However, there remains a question of whether it is possible to construct codes of length $n$ with an arbitrary large size capable of correcting  $n(1/4+\epsilon)$ asymmetric errors. A Plotkin-type bound based on linear programming arguments was derived in~\cite{borden1983low}. Moreover, it was claimed that  ``there are arbitrarily large codes which can correct $1/3$ asymmetric errors per code letter''. 
As a side result, in Section~\ref{sec::high error low rate codes}, we disprove this statement.

Under the guise of a half-lie game, coding over the Z-channel with the noiseless feedback has been discussed in \cite{rivest1980coping,cicalese2000optimal,dumitriu2004halfliar,spencer2003halflie}. However, in all these papers, only a constant number of errors and fully adaptive strategies were assumed. Later, in~\cite{dumitriu2005two} the authors have shown that for a constant number of errors, it is sufficient to use the feedback only once to transmit asymptotically the same amount of messages. Feedback codes correcting a fraction $\tau$ of errors in the Z-channel have been discussed in~\cite{dyachkov11upper,deppe2020coding,deppe2020bounds}. In particular, it was shown~\cite{deppe2020coding} that for any $\tau<1$, the maximal asymptotic rate is positive. 

\subsection{Problem statement}
In this paper, we discuss the asymptotic rate of codes correcting a fraction $\tau$ of asymmetric errors if one use of the feedback is allowed. This setting can be seen as a compromise between fully adaptive encoding strategies and classic error-correcting codes. As mentioned above this type of problem has been discussed before only for a constant number of errors and the methods developed in~\cite{dumitriu2005two} can not be used here. Let us describe the model in more details.

Let $t$ denote the total number of errors and $n$ be the total number of channel uses. We fix some integer $n_1$ such that $1< n_1 < n$. For any message $m\in[M]$, Alice wishes to encode it to a string $\x=(x_1,\ldots,x_n)$ such that after transmitting this string through the Z-channel, Bob would be able to correctly decode the message. The output string $\y=(y_1,\ldots,y_n)$ is controlled by an adversary Calvin who can make errors. At the $i$th moment, Alice generates a binary symbol $x_i$, and Calvin takes $x_i$ and outputs $y_i$, where
$$
y_i
\in
\begin{cases}
\{0\},\quad &\text{if } x_i=0,\\
\{0,1\},\quad &\text{if } x_i=1.
\end{cases}
$$
  The process of encoding a string $\x$ consists of two stages.  At the $(n_1+1)$th moment, Alice  adapts the further encoding strategy for the message $m$ based on the string $\y^{n_1}:=(y_1,\ldots,y_{n_1})$. In other words, $x_i=x_i(m)$ for $i\le n_1$ and $x_{i}=x_{i}(m,\y^{n_1})$ for $i>n_1$. We require the total number of errors that Calvin can produce to be at most $t$. This communication scheme is depicted in Figure~\ref{fig::communication channel}.
 Let $M_{Z}^{(2)}(n,t)$ be the maximum number of messages Alice can transmit to Bob under conditions imposed by this model. We remark that the optimal moment for using the feedback, $n_1$, depends on $t$ and $n$. However, Alice and Bob can agree on this parameter beforehand to maximize the total number of messages. Define the \textit{maximal asymptotic rate} of two-stage error-correcting  codes for the Z-channel to be
 $$
 R_{Z}^{(2)}(\tau):=\limsup_{n\to\infty} \frac{\log M_{Z}^{(2)}(n,\lfloor \tau n \rfloor)}{n}.
 $$
 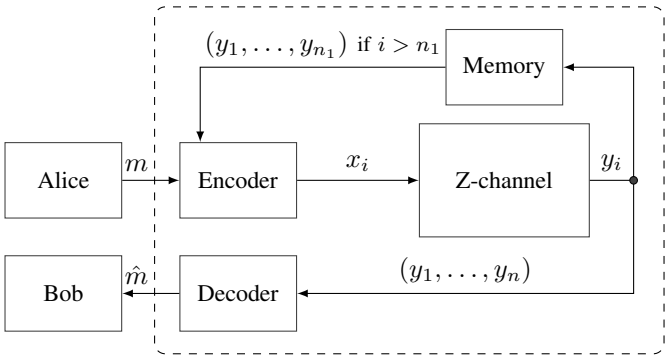
\begin{figure}[t]
		\centering
		\begin{tikzpicture}
		\node[block_medium] (c) at (-0.2,0) {Z-channel};
		\node[block] (d) at (-0.2, 1.5) {Memory};
		\node[block] (e1) at (-3.7,0) {Encoder};
		\node[block] (u1) at (-6,0.0) {Alice};
		\node[block] (b) at (-3.7,-1.5) {Decoder};
		\node[block] (o1) at (-6,-1.5) {Bob};
		\draw (c.east)  -- node[above] {$y_i$} (1.5,0);
		\path[line] (u1) -- node[near start, above] {$m$} (e1);
		\path[line] (e1.east) -- node[above] {$x_i$} (c.west);
		\path[line] (1.5,0) -- (1.5,-1.5) -- node[above] {$(y_1,\ldots, y_n)$} (b.east);
		\path[line] (1.5,0) -- (1.5,1.5) -- (d.east);
		\path[line] (d.west) -- node[above] {$(y_1,\ldots, y_{n_1})$ {\footnotesize if $i>n_1$} } (-4.2,1.5) -- ([xshift=-0.5cm]e1.north);
		\path[draw, dashed, rounded corners] (-4.8,2.3) -- (-4.8,-2.3) -- (1.9,-2.3) -- (1.9,2.3) -- cycle;
		\path[line] (b.west) -- node[near end, above] {$\hat{m}$} (o1);
		\node[draw, circle, minimum size=1mm, inner sep=0pt, outer sep=0pt, fill=darkgray] at (1.5,0) {};
		\end{tikzpicture}
		\caption{Two-stage coding over the Z-channel.}
		\label{fig::communication channel}
	\end{figure}
 
%
\subsection{Our contribution and methodology}
Our contribution can be split into three components. 

\textbf{1)} We show that for the Z-channel without feedback, an error-correcting code correcting a  fraction  $1/4+\epsilon$ of errors has size at most $O(\epsilon^{-3/2})$. This bound is shown by proper partitioning an arbitrary code into $O(\epsilon^{-1/2})$ almost constant-weight subcodes and showing that each subcode contains $O(\epsilon^{-1})$ codewords. 


\textbf{2)} We provide upper and lower bounds for the maximal cardinality of list-decodable codes for the Z-channel. Our lower bound is inspired by the probabilistic method, whereas the upper bound is based on the  double counting technique.

\textbf{3)} We describe a two-stage encoding strategy by combining two ideas. For the first batch of channel uses, we make use of a random constant-weight code for which we derive the list-decoding radius for all list sizes. The nature of the Z-channel and the constant-weight property of the code enable us to find the number of errors inflicted by the channel at the first stage. After this stage, we list decode the received string  to get a list of candidate messages.  Depending on the remaining noise, at the second stage we use either a code of small size which tolerates a large fraction of asymmetric errors, or a $\frac{1}{2}$-constant-weight code with a positive rate and large minimum distance. This strategy leads to the existential result. Our converse result also relies on the concepts of list-decodable codes and high-error low-rate codes. Thereby, we show that $R_{Z}^{(2)}(\tau)>0$ for $\tau<\tau_{\max}\coloneqq \max\limits_{0<w<1}\frac{w + w^3}{1+4w^3}\approx 0.44$ and $R_{Z}^{(2)}(\tau)=0$ for $\tau>\tau_{\max}$.
\begin{remark}
More general converse results on zero-rate list-decodable codes for the Z-channel  and order-optimal code constructions are presented in the parallel work~\cite{polyanskii2021codes}, where the bound $O(\epsilon^{-3/2})$ was first derived.
\end{remark}
\begin{remark}
The lower bound for list-decodable codes is derived by similar methods as in~\cite{blinovsky1986bounds}. The authors thank Yihan Zhang for showing the converse bound for list-decodable codes.
\end{remark}
\begin{remark}
We note that the methodology that is used for analyzing two-stage encoding schemes is
close to the ideas presented in works~\cite{chen2015characterization,chen2019capacity}, where the authors characterized the capacity of binary and non-binary online (or causal) channels.
\end{remark}
\subsection{Outline}
The remainder of the paper is organized as follows. In Section~\ref{sec::preliminaries}, we introduce the required notation and definitions. Section~\ref{sec::high error low rate codes} discusses high-error low-rate codes for the Z-channel. In Section~\ref{sec::list decodable codes}, we introduce the concept of list-decodable codes for the Z-channel and investigate lower and upper bounds on the maximal cardinality of such codes. Section~\ref{sec::2 stage algorithm} describes the suggested two-stage encoding algorithm and derives a Plotkin-type point for this problem. Finally, Section~\ref{sec::conclusion} concludes the paper.
\section{Preliminaries}\label{sec::preliminaries}
 We start by introducing some notation that is used throughout the paper. The set of integers from $m$ to $n$, $m\le i \le n$, is abbreviated by $[m,n]$ or simply $[n]$ if $m=1$. A vector of length $n$ is denoted by bold lowercase letters, such as $\x$,
	and the $i$th entry of the vector $\x$ is referred to as $x_i$. 	Given a binary vector $\x$, we define its support $\s(\x)$ as the set of coordinates in which the vector $\x$ has nonzero entries. By $\0$ and $\1$ denote the all-zero and the all-one vectors, respectively.
For $\x,\y\in\{0,1\}^n$, let $\Delta(\x,\y)$ denote the number of positions $i$ such that $x_i=1$ and $y_i=0$. We define the \textit{asymmetric distance}, written as $d_{Z}(\x,\y)$, to be  $2\max(\Delta(\x,\y),\Delta(\y,\x))$. The \textit{symmetric (Hamming) distance} $d_H(\x,\y)$ is then $\Delta(\x,\y)+\Delta(\y,\x)$. By $\wt(\x):=d_H(\x,\0)$ we abbreviate the \textit{Hamming weight} of $\x$. For $\x\in\{0,1\}^n$, the quantity $\wt(\x)/n$ stands for the \textit{normalized Hamming weight} of $\x$. Because of the relation $d_{Z}(\x,\y)=d_H(\x,\y)+|\wt(\x)-\wt(\y)|$, we have that for $\x,\y\in\{0,1\}^n$ with $\wt(\x)=\wt(\y)$, the asymmetric and symmetric distances between $\x$ and $\y$ coincide, i.e, $d_H(\x,\y)=d_{Z}(\x,\y)$.

An arbitrary subset $\C\subset\{0,1\}^n$ is called a \textit{code}.  The number of codewords in $\C$ is denoted by $|\C|$ and is called the \textit{size} of the code $\C$. The quantity $\log |\C|/n$ is called the \textit{rate} of $\C$. The code $\C$ is said to be \textit{$\omega$-constant-weight} if the  Hamming weight of each codeword is $\lfloor \omega n\rfloor$. The minimum asymmetric distance of the code $\C$, written as $d_{Z}(\C)$, is then defined as the minimum of $d_{Z}(\x,\y)$ over all distinct $\x$ and $\y$ from $\C$. Similarly, we define the minimum symmetric (Hamming) distance $d_{H}(\C)$ of the code $\C$.
It is known~\cite{kim1959single,varshamov1965theory} that a code $\C$ with $d_{Z}(\C)\ge d$ can correct up $t=\lfloor\frac{d-1}{2}\rfloor$ errors in the Z-channel. In what follows, we assume that $t:=\lfloor\frac{d-1}{2}\rfloor$.

By $A_H(n,t)$ ($A_{Z}(n,t)$) denote the maximum number of codewords in a code of length $n$ capable of correcting $t$ symmetric (asymmetric) errors.  We define $A_H(n,w,t)$ ($A_{Z}(n,w,t)$) as the maximum number of codewords in a constant-weight code with the Hamming weight $w$ and length $n$ capable of correcting $t$ symmetric (asymmetric) errors. It is readily seen that $A_H(n,w,t)=A_{Z}(n,w,t)$ as the definitions of symmetric and asymmetric distances coincide for constant-weight codes. 

Now we are in a good position to briefly recall some well-known result which appear to be useful for us. Define $w_0=w_0(n,t)$ to be $(n-\sqrt{n^2-4t n})/2$. It is known~\cite[Lemma]{bassalygo1965new} that if a code can correct $t$ symmetric errors, then it is a list-decodable code with a radius at most $w_0$ and the list size is polynomial in $n$. 
\begin{proposition}[Converse bound for codes with small constant weight]\label{th::list size when small radius} For $
t+1\le w \le w_0$, we have 
$$
A_H(n,w,t)=A_{Z}(n,w,t)\le 
\left\lfloor\frac{tn}{w^2 - (w-t)n}\right\rfloor.
$$
\end{proposition}
If $w>w_0(n,t)$, then $A_H(n,w,t)$ could be exponential. However, the exponential growth can be bounded by Theorem 2 from~\cite{levenshtein1971upper}.
\begin{proposition}[Converse bound for codes with large constant weight]\label{th::list size when large radius} For $
w_0< w \le n/2$, $n\to\infty$, we have 
\begin{align*}
\frac{\log A_H(n,w,t)}{n}&\le 
h\left(\frac{w}{n}\right) - h\left(\frac{w_0}{n}\right) + o(1),
\end{align*}
where $h(x)$ denotes the binary entropy function.
\end{proposition}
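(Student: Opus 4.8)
The plan is to prove the bound by a double-counting (Elias--Bassalygo-type) argument that exploits the list-decodability quoted just before Proposition~\ref{th::list size when small radius}. Since a code correcting $t$ symmetric errors satisfies $d_H(\C)\ge 2t+1$, that result of Bassalygo guarantees $\C$ is list-decodable with radius $w_0$ and list size $L=\mathrm{poly}(n)$; i.e.\ every $\y\in\{0,1\}^n$ has at most $L$ codewords within Hamming distance $w_0$. I will use only this property together with the fact that $\C$ is a constant-weight code of weight $w$.

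First I would fix a ``center weight'' $v$ and count incidences, i.e.\ pairs $(\c,\y)$ with $\c\in\C$, $\wt(\y)=v$ and $d_H(\c,\y)\le w_0$. On the one hand, by list-decodability each of the $\binom{n}{v}$ centers is counted at most $L$ times, so the number of incidences is at most $\binom{n}{v}L$. On the other hand, because every codeword has the same weight $w$, each $\c$ lies within distance $w_0$ of exactly $N_v:=\sum \binom{w}{a}\binom{n-w}{b}$ centers of weight $v$, the sum being over $a,b\ge 0$ with $b-a=v-w$ and $a+b\le w_0$ (here $a$ ones of $\c$ are switched off and $b$ zeros switched on). Since $N_v$ is the same for every $\c$, double counting gives $|\C|\le \binom{n}{v}L/N_v$, and taking $\C$ optimal yields $A_H(n,w,t)\le \binom{n}{v}L/N_v$ for every admissible $v$.

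The remaining step is to optimize the free parameter $v$. Writing $v=\beta n$, $\alpha=w/n$ and using $\tfrac1n\log\binom{n}{v}\to h(\beta)$ together with $\tfrac1n\log N_v\to \max\big[\alpha\, h(\xi/\alpha)+(1-\alpha)\,h(\eta/(1-\alpha))\big]$, where the maximum is over $\xi+\eta\le w_0/n$ and $\eta-\xi=\beta-\alpha$, I would minimize the exponent $h(\beta)-\tfrac1n\log N_v$ over $\beta$ (the factor $L$ contributes only $\tfrac1n\log L=o(1)$). Since a summand $\binom{w}{a}\binom{n-w}{b}$ grows as $a+b$ increases, provided $a<w/2$ and $b<(n-w)/2$, the inner maximum is attained on the boundary $a+b=w_0$; a Lagrange-multiplier computation on this boundary, combined with the defining identity $w_0(n-w_0)=tn$ of $w_0$, should collapse the optimum to $h(w/n)-h(w_0/n)$, which is the claimed bound.

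I expect the optimization to be the main obstacle: one must solve the stationarity condition on the boundary curve and verify that it yields the clean closed form $h(w/n)-h(w_0/n)$ rather than a messier expression, which is exactly the point where the special value $w_0(n-w_0)=tn$ has to be invoked. (A naive greedy shortening via the Johnson recursion is, by contrast, strictly weaker, so spreading each codeword over many centers of an intermediate weight is essential.) Secondary points to check are that the optimal center weight lies strictly between $w_0$ and $w$, so that the maximizing $\xi,\eta$ are interior and pushing to $a+b=w_0$ is legitimate, and that the quoted polynomial list-decodability is genuinely available as a black box so that $L=2^{o(n)}$.
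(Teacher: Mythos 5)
The paper itself gives no proof of this proposition: it is imported as Theorem~2 of~\cite{levenshtein1971upper}, so there is no in-paper argument to compare yours against, and I can only judge your proposal on its own merits.

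Your skeleton (Bassalygo list-decodability at radius $w_0$ plus double counting against centers of a single well-chosen weight $v$) is sound, and the step you flag as the main obstacle --- that the optimization collapses to $h(w/n)-h(w_0/n)$ --- does in fact go through, with equality, so the approach works. Here is the clean way to finish it, which also removes your ``secondary points''. Write $\alpha=w/n$, $\rho_0=w_0/n$, and observe that for an \emph{upper} bound on $|\C|$ you never need to locate the inner maximum of $N_v$: a single term of the sum suffices, since $N_v$ is at least that term. Take
\begin{equation*}
\beta:=\frac{\alpha-\rho_0}{1-2\rho_0}\in[0,1/2],\qquad a:=\lfloor(1-\beta)w_0\rfloor,\qquad b:=\lfloor\beta w_0\rfloor,
\end{equation*}
which is legitimate because $\rho_0<\alpha\le 1/2$; then $a+b\le w_0$ and the center weight is $v=w-a+b=\beta n+O(1)$. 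The exponent of the resulting bound $\binom{n}{v}L\big/\bigl(\binom{w}{a}\binom{n-w}{b}\bigr)$ is
\begin{equation*}
h(\beta)-\alpha\,h\!\left(\frac{(1-\beta)\rho_0}{\alpha}\right)-(1-\alpha)\,h\!\left(\frac{\beta\rho_0}{1-\alpha}\right)+o(1).
\end{equation*}
To evaluate it, interpret everything probabilistically: let $Y\sim\mathrm{Bern}(\beta)$ and $E\sim\mathrm{Bern}(\rho_0)$ be independent, and set $X=Y\oplus E$. The choice of $\beta$ gives exactly $X\sim\mathrm{Bern}(\alpha)$, while $\Pr(Y=0\mid X=1)=(1-\beta)\rho_0/\alpha$ and $\Pr(Y=1\mid X=0)=\beta\rho_0/(1-\alpha)$, so the displayed quantity is $H(Y)-H(Y\mid X)=I(X;Y)=H(X)-H(X\mid Y)=h(\alpha)-h(\rho_0)$, because $H(X\mid Y)=H(E)=h(\rho_0)$. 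Hence $A_H(n,w,t)\le L\,2^{n(h(\alpha)-h(\rho_0)+o(1))}$, which is the proposition. (Conversely, Fano's inequality gives $H(X\mid Y)\le h(\Pr(X\ne Y))\le h(\rho_0)$ for every admissible error split, so $h(\alpha)-h(\rho_0)$ is exactly the best this double counting can produce --- your expectation that the optimum is attained at an interior-direction point on the boundary $a+b=w_0$ is confirmed, but it never needs to be verified.)

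Two corrections to your plan, neither fatal. First, the identity $w_0(n-w_0)=tn$ plays no role in the collapse of the optimization: the computation above works for \emph{any} radius $\rho_0\le\min(\alpha,1/2)$. The special value $w_0$ enters only through the list-size guarantee; for that, take the counting radius to be an integer strictly smaller than $w_0$, so that the denominator in Proposition~\ref{th::list size when small radius} is a positive integer and $L\le tn=\mathrm{poly}(n)$ (the loss in radius is absorbed in the $o(1)$). Second, as explained, the boundary-attainment and interiority checks you list as remaining obstacles can simply be dropped: one term of $N_v$ is enough, and the Lagrange analysis is replaced by the one-line mutual-information identity.
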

We state the classic Plotkin bound which was proved in~\cite{plotkin1960binary}. 
\begin{proposition}[Plotkin bound]\label{prop: classic plotkin bound} For $t>n/4$, it holds
  $$
  |A_H(n,t)|\le 2\left\lfloor \frac{2t+2}{4t+3 -n}\right\rfloor.
  $$
This implies that a code correcting a fraction $1/4 + \epsilon$ of symmetric errors contains at most $\lfloor 1 + 1/(4\epsilon)\rfloor$ codewords.
\end{proposition}

Finally, we mention an existential result. Constructions based on Hadamard matrices are shown~\cite{levenshtein1961application} to achieve the so-called Plotkin bound for symmetric errors. We will make use of a weak version of this result which follows from ~{\cite[Theorem 1]{levenshtein1961application}} and works for all parameters.
\begin{proposition}[Construction of high-error codes] \label{prop::construction for symmetric errors}
For any $\epsilon>0$ and $M\ge 1$, there is $n_0=n_0(M,\epsilon)$ such that for all $n>n_0$, there exists a code of size $M$ and length $n$ capable of correcting a fraction $\frac{M}{4M-2} - \epsilon$ of symmetric errors.

\end{proposition}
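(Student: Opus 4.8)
The plan is to reduce the statement to a minimum-distance condition and then pad an explicit Plotkin-optimal base code up to every large length. Recall from the preliminaries that a code corrects a fraction $\tau$ of symmetric errors exactly when its minimum Hamming distance satisfies $d_H(\C)\ge 2\lfloor \tau n\rfloor+1$, i.e.\ $t=\lfloor(d_H(\C)-1)/2\rfloor$. Since $2\cdot\frac{M}{4M-2}=\frac{M}{2M-1}$, it suffices to construct, for every $n>n_0(M,\epsilon)$, a code $\C\subseteq\{0,1\}^n$ with $|\C|=M$ and relative minimum distance $d_H(\C)/n\ge \frac{M}{2M-1}-\epsilon$; then the correctable fraction $\lfloor(d_H(\C)-1)/2\rfloor/n$ exceeds $\frac{M}{4M-2}-\epsilon$ once $n$ is large enough to absorb the floor. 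The case $M=1$ is trivial, so assume $M\ge 2$.

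For the base code I would invoke the Hadamard/simplex construction underlying Levenshtein's Theorem~1. Let $k$ be minimal with $2^k\ge M$, so that $M\le 2^k<2M$, and let $S_k$ be the $[2^k-1,k]$ simplex code, all $2^k$ of whose codewords are pairwise at Hamming distance exactly $2^{k-1}$. Selecting any $M$ of them yields a code $\C_0$ of length $\ell:=2^k-1$, size $M$, and relative distance $\frac{2^{k-1}}{2^k-1}$. Because $2^k<2M$ gives $2^k-1\le 2M-2$, one checks $\frac{2^{k-1}}{2^k-1}=\tfrac12\bigl(1+\tfrac{1}{2^k-1}\bigr)\ge\tfrac12\bigl(1+\tfrac{1}{2M-1}\bigr)=\frac{M}{2M-1}$, so the base code already beats the target relative distance with room to spare. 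Using only Sylvester--Hadamard matrices here avoids any appeal to the Hadamard conjecture, which is exactly why the weakened fraction $\frac{M}{4M-2}$, rather than the sharp Plotkin value $\frac{M}{4M-4}$, can be guaranteed for \emph{every} $M$.

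To reach an arbitrary length I would repeat and pad: write $n=r\ell+s$ with $0\le s<\ell$, concatenate $r$ copies of each codeword of $\C_0$, and append $s$ zeros. The result has size $M$, length $n$, and minimum distance $r\cdot 2^{k-1}$ (the appended zeros are common to all codewords and add no distance), so its relative distance is $\frac{r\,2^{k-1}}{r\ell+s}\ge\frac{2^{k-1}}{\ell}\bigl(1-\tfrac{s}{r\ell}\bigr)\ge\frac{M}{2M-1}-O(1/r)$. Since $\ell=\ell(M)$ is fixed, $r\to\infty$ as $n\to\infty$; choosing $n_0(M,\epsilon)$ large enough that both $O(1/r)$ and the floor loss in $t=\lfloor(d-1)/2\rfloor$ fall below $\epsilon$ completes the argument.

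The construction is elementary; the only genuine work is the bookkeeping that turns a fixed-length base code into codes of every large length while keeping the relative distance above $\frac{M}{2M-1}$. The step deserving the most care is checking that the deliberate gap between $\frac{M}{4M-2}$ and the exact Plotkin value leaves enough slack to absorb simultaneously (i) the padding loss of order $\ell/n$, (ii) the floor in the error count, and (iii) the mismatch between $M$ and the nearest power of two; the inequality $2^k<2M$ above is precisely what makes all three fit.
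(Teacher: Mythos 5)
Your proof is correct and follows essentially the route the paper intends: the paper gives no explicit proof, merely citing Levenshtein's Hadamard-matrix construction, and your simplex code of length $2^k-1$ is precisely the Sylvester--Hadamard instance of that construction (which, as you note, avoids the Hadamard conjecture and is exactly why the weakened fraction $\frac{M}{4M-2}$ rather than the sharp Plotkin value is claimed ``for all parameters''). The repetition-plus-padding argument to reach every large length $n$, together with the floor bookkeeping, correctly supplies the routine details that the paper leaves implicit in the citation.
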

\section{High-error low-rate codes for the Z-channel}\label{sec::high error low rate codes}
In this section we discuss error-correcting codes for the case when the fraction of asymmetric errors is large. As the main result of this line of research, we prove that the cardinality of a code that corrects a $1/4+\epsilon$ fraction of asymmetric errors is at most $O(\epsilon^{-3/2})$.

	Let us introduce the notion of the maximum fraction of  correctable asymmetric errors for codes of a given size.
	\begin{definition}[Maximum fraction of correctable errors]\label{def::max correctable fraction}
		Given a positive integer $M$, define the quantity  $\tau_Z(M)$ to be the supremum of $\tau$ such that there exists a code of size $M$ that corrects a fraction $\tau$ of asymmetric errors.
	\end{definition}

For $M\ge 2$, define a binary matrix $D=D(M)$ with $\binom{M}{2}$ rows indexed by pairs from the set $[M]$ and $2^{M}$ columns. For $1\le i<j\le M$ and $k\in[2^M]$, the entry $D_{(i,j),k}$ equals $1$ iff the $i$th and $j$th entries in the binary representation of the integer $k$ are $0$ and $1$, respectively. 
By applying some linear programming arguments, the following statement on $\tau_Z(M)$ was proved in~\cite{borden1983low}.
\begin{lemma}[Follows from~\cite{borden1983low}]
For $M\ge 2$, the maximum fraction of correctable asymmetric errors $\tau_Z(M)$ satisfies
\begin{equation}\label{eq: maximization tau}
\tau_Z(M)^{-1} = \max_{(*)} \sum_{i=1}^{\binom{M}{2}}y_i,
\end{equation}
where the maximization $(*)$ is taken over all possible real vectors $\y$ of length $\binom{M}{2}$  such that each entry of $\y$ is non-negative and each entry of $\y D$ is at most $1$.
\end{lemma}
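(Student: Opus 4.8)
The plan is to rewrite $\tau(M)^{-1}$ as a covering linear program whose dual is exactly the stated maximization, and then invoke LP strong duality. First I would record a code $\C=\{\c_1,\dots,\c_M\}$ of length $n$ by the frequencies of its columns: for $k\in[2^M]$ with support $S=\{i:\text{the $i$th bit of }k\text{ is }1\}$, let $n_k$ be the number of coordinates whose column pattern is $k$, so $\sum_k n_k=n$. For a pair $i<j$ one has $\Delta(\c_j,\c_i)=\sum_{k}D_{(i,j),k}\,n_k$ (the coordinates with $i\notin S\ni j$) and, symmetrically, $\Delta(\c_i,\c_j)=\sum_{k:\,i\in S\not\ni j}n_k$; the minimal asymmetric distance of $\C$ equals $2\min_{i<j}\max\!\big(\Delta(\c_i,\c_j),\Delta(\c_j,\c_i)\big)$.

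The crucial step, which eliminates the maximum, is the identity $\Delta(\c_j,\c_i)-\Delta(\c_i,\c_j)=\wt(\c_j)-\wt(\c_i)$, which follows because $\Delta(\c_j,\c_i)=\wt(\c_j)-|\s(\c_i)\cap\s(\c_j)|$ and likewise for $\Delta(\c_i,\c_j)$. Consequently, if the codewords are labelled so that $\wt(\c_1)\le\dots\le\wt(\c_M)$, then $\Delta(\c_j,\c_i)\ge\Delta(\c_i,\c_j)$ for all $i<j$, and the minimal asymmetric distance collapses to $2\min_{i<j}\sum_k D_{(i,j),k}n_k$. Because $\tau(M)$ only asks for the \emph{existence} of a code, we may always adopt this weight-increasing labelling. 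Writing $\boldsymbol{n}=(n_k)$, this yields, after rescaling the frequencies so that the minimum equals $1$,
$$
\tau(M)^{-1}=\min\Big\{\,\textstyle\sum_{k} n_k \ :\ D\boldsymbol{n}\ge\1,\ \boldsymbol{n}\ge\0\,\Big\}.
$$
Here ``$\le$'' is immediate — any $\boldsymbol{n}$ with $D\boldsymbol{n}\ge\1$ already forces $\max(\Delta(\c_i,\c_j),\Delta(\c_j,\c_i))\ge1$ for every pair, hence gives a genuine code — while ``$\ge$'' is precisely the weight-ordering argument applied to an optimal code. Passing from integer column counts to nonnegative reals is harmless, since the quantity is a normalized (asymptotic) ratio and rational optima are realized by scaling $n$.

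Finally I would apply LP strong duality to the program above. Its dual is exactly $\max\{\sum_{i}y_i: \y D\le\1,\ \y\ge\0\}$: the dual variable attached to row $(i,j)$ is $y_{(i,j)}$, the constraint dual to column $k$ reads $\sum_{(i,j)}y_{(i,j)}D_{(i,j),k}\le1$, and the objective is $\sum_{(i,j)}y_{(i,j)}$. Both programs are feasible ($\y=\0$ for the maximization, any sufficiently large $\boldsymbol{n}$ for the minimization) and the maximization is bounded — the column $S=\{j\}$ yields $\sum_{i<j}y_{(i,j)}\le1$, whence each $y_{(i,j)}\le1$ — so strong duality applies and the optima coincide, giving the claimed formula. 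I expect the main obstacle to be exactly the handling of the maximum in the asymmetric distance: without the weight-ordering identity both directed distances must be controlled simultaneously, and a naive union bound loses a factor of two. A minor technical point is the strict positivity required of $\y$; since $\y=\epsilon\1$ is feasible for small $\epsilon>0$ and the objective is continuous, the supremum over strictly positive $\y$ equals the maximum over $\y\ge\0$, so the stated value is unaffected.
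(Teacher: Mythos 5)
Your proof is correct: the column-frequency reformulation, the weight-ordering identity $\Delta(\c_j,\c_i)-\Delta(\c_i,\c_j)=\wt(\c_j)-\wt(\c_i)$ that eliminates the maximum, and the passage to the dual covering program via LP strong duality are all sound, and your handling of the integrality and strict-positivity technicalities is adequate. The paper itself gives no proof of this lemma, deferring to the ``linear programming arguments'' of the cited work of Borden; your argument is precisely a self-contained instance of that approach, so it matches the intended route rather than diverging from it.
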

We compute $\tau_Z(M)$ for some small $M$ and depict these values in Table~\ref{table:1}. This table extends a similar one given in~\cite{borden1983low}.
\begin{remark}
Table~\ref{table:1} already disproves the claim from~\cite{borden1983low} saying that $\tau_Z(M)\ge 1/3$ for any $M$.  In particular, the mistake in~\cite{borden1983low} was made when the feasible region of the original linear program was relaxed and the author derived the wrong equation (4) in~\cite{borden1983low} using the correct one (3). 
\end{remark}
\begin{table}[t]
\centering
\caption{Maximum fraction of correctable asymmetric errors for codes of a given size}
 \begin{tabular}{||c | c ||c | c || c | c ||} 
 \hline
 $M$ & $\tau_Z(M)$ & $M$ & $\tau_Z(M)$ & $M$ & $\tau_Z(M)$\\ [0.5ex] 
 \hline\hline
 18 & $\frac{1083}{3467}$ & 13 & $\frac{18}{55}$  & 8 & $\frac{4}{11}$\\ 
  [0.3ex] 
 \hline
 17 & $\frac{712}{2263}$ & 12 & $\frac{1}{3}$ & 7 &  $\frac{3}{8}$ \\
  [0.3ex] 
 \hline
 16 & $\frac{1029}{3238}$ & 11 & $\frac{31}{92}$ &  5,6 & $\frac{2}{5}$ \\ 
  [0.3ex] 
 \hline
15 & $\frac{377}{1177}$ & 10 & $\frac{9}{26}$  & 3,4& $\frac{1}{2}$\\
  [0.3ex] 
 \hline

14 & $\frac{35}{108}$  & 9 & $\frac{13}{37}$  & 2&  $1$ \\ 
  [0.3ex] 
 \hline
\end{tabular}

\label{table:1}
\end{table}

Now we proceed with a trivial statement saying that there exist infinitely long codes of size $M$  correcting a fraction $\tau_Z(M)$ 
of asymmetric errors.
\begin{proposition}[Long high-error codes]\label{prop:trivial construction}
For any real $\epsilon>0$ and integer $M\ge 1$, there exists $n_0=n_0(\epsilon,M)$ such that for all $n>n_0$ there exists a code of size $M$ and length $n$ correcting a fraction $\tau_Z(M)-\epsilon$ of asymmetric errors.
\end{proposition}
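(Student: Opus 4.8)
The plan is to take a single near-optimal finite-length code witnessing the value $\tau(M)$ and blow it up to arbitrary lengths by \emph{repetition and zero-padding}, two operations that interact cleanly with the asymmetric distance. First I would fix a code $\C'$ of size $M$ and some finite length $n'$ whose normalized asymmetric distance $d_Z(\C')/(2n')$ is at least $\tau(M)-\epsilon/2$; such a code exists directly from the definition of $\tau(M)$ as the largest achievable value of this normalized quantity over all lengths.

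The key structural observation is how $d_Z$ behaves under the two operations I use. If $\x^{(r)}$ denotes the concatenation of $r$ copies of $\x$, then $\Delta(\x^{(r)},\y^{(r)})=r\,\Delta(\x,\y)$ for every pair, so the $r$-fold repetition code $(\C')^{(r)}$ has length $rn'$ and minimal asymmetric distance exactly $r\,d_Z(\C')$; its normalized asymmetric distance is therefore unchanged. Appending a fixed block of zeros to every codeword increases the length but contributes nothing to any $\Delta(\cdot,\cdot)$, hence leaves the code's $d_Z$ untouched while only diluting the normalized distance.

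Combining these, to hit an arbitrary target length $n$ I would set $r=\lfloor n/n'\rfloor$, take $(\C')^{(r)}$ (of length $rn'\le n$), and pad each codeword with $n-rn'<n'$ trailing zeros. The resulting code $\C$ has size $M$, length $n$, and minimal asymmetric distance $r\,d_Z(\C')$, so its normalized asymmetric distance is
$$
\frac{r\,d_Z(\C')}{2n}=\frac{rn'}{n}\cdot\frac{d_Z(\C')}{2n'}\ \ge\ \Bigl(1-\frac{n'}{n}\Bigr)\Bigl(\tau(M)-\frac{\epsilon}{2}\Bigr),
$$
using $rn'>n-n'$. For $n$ larger than a threshold $n_0=n_0(\epsilon,M)$ of order $n'/\epsilon$, the right-hand side exceeds $\tau(M)-\epsilon$.

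Finally I would translate this distance bound into the error-correction guarantee via the relation $t=\lfloor(d_Z(\C)-1)/2\rfloor$ recalled in the preliminaries. The only care needed is the discretization: the floor in this relation and the floor in $\lfloor(\tau(M)-\epsilon)n\rfloor$ each cost at most a constant, which is exactly why I retain the extra $\epsilon/2$ of slack above — it guarantees $t\ge\lfloor(\tau(M)-\epsilon)n\rfloor$ once $n$ is large. I do not expect any genuine obstacle here; the whole argument is a dilution estimate, and the only mild subtlety is enlarging $n_0$ enough to absorb the two floor functions.
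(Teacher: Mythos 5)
Your proof is correct. The paper actually offers no proof at all for this proposition (it is introduced as a ``trivial statement'' and left unproved), and your repetition-plus-zero-padding dilution argument is exactly the standard construction the authors evidently had in mind: both operations scale or preserve $\Delta(\cdot,\cdot)$ as you claim, the bound $rn'>n-n'$ gives the stated normalized-distance estimate, and your $\epsilon/2$ slack correctly absorbs the two floor discretizations for $n$ beyond a threshold of order $n'/\epsilon$.
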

In the following statement, we derive an upper bound on the maximal cardinality of a code capable of correcting a large fraction of asymmetric errors. The idea of the proof is to partition a code into $O(\epsilon^{-1/2})$ subcodes  according to the Hamming weight of codewords and prove that each subcode contains only $O(\epsilon^{-1})$ codewords.
\begin{lemma}[Plotkin-type bound for asymmetric error-correcting codes]\label{lem::plotkin bound z channel}
For $\epsilon>0$, any code correcting a fraction $1/4+\epsilon$  of asymmetric errors  contains $O(\epsilon^{-3/2})$ codewords. Furthermore, $\tau_{Z}(M)\to 1/4$ as $M\to\infty$. 
\end{lemma}
\begin{proof}
	Let $\C\subset \{0,1\}^n$ be a code correcting $(1/4+\epsilon)n$  asymmetric errors.  Let $\rho_i:=\frac{i}{2i+1}$. For simplicity of notation, we assume that $\rho_i n$ and $\epsilon n$ are integers. Define a subcode $ \C_i'\subset \C$ containing only codewords with weight in the range $[\rho_i n + 1,\rho_{i+1}n]$.  We append $(\rho_{i+1}-\rho_i)n-1$ extra coordinates to each codeword in $\C_i'$ such that the augmented codewords have the same weight $\rho_{i+1}n$. Note that this can be done in  different ways. From this point we assume that the code $\C_i'\subset\{0,1\}^{(1+\rho_{i+1}-\rho_i)n-1}$ contains only codewords with weight $\rho_{i+1}n$ and corrects $(1/4+\epsilon)n$ asymmetric errors. By Proposition~\ref{th::list size when small radius}, we have that
	\begin{align}
	&|\C_i'| \nonumber\\
	\le& \left\lfloor\frac{(1/4+\epsilon)n ((1+\rho_{i+1}-\rho_i)n-1)}{\rho_{i+1}^2n^2 - (\rho_{i+1} n - n/4-\epsilon n)((1+\rho_{i+1}-\rho_i)n-1)}\right\rfloor \nonumber\\
	\le&
	  \left\lfloor\frac{(1/4+\epsilon)(1+\rho_{i+1}-\rho_i)}{\rho_{i+1}^2 - (\rho_{i+1} - 1/4-\epsilon)(1+\rho_{i+1}-\rho_i)}\right\rfloor.\label{eq: first partitioning}
	\end{align}
	Note that $\rho_{i+1}^2 - (\rho_{i+1}-1/4)(1+\rho_{i+1}-\rho_i)= 0$ as  $\rho_i=i/(2i+1)$. Thus, $|\C_i'|\le \lfloor 1 + 1/(4\epsilon)\rfloor$.  
Let us take $i_0:= \lceil 1/\sqrt{3\epsilon} \rceil$. We form a subcode $\C_j''$ consisting of all codewords of $\C$ with weight in the range  $[\rho_{i_0}n+j\epsilon n + 1,\rho_{i_0}n+(j+1)\epsilon n]$, $j=0,1,2 \dots$. Again, we append $\epsilon n$ extra positions to each codeword in $\C_j''$ such that the augmented codewords have the same weight $\rho_{i_0}n+(j+1)\epsilon n$. From this point we assume that the code $\C_j''\subset\{0,1\}^{(1+\epsilon)n}$ contains only codewords with weight $\rho_{i_0}n+(j+1)\epsilon n$ and corrects $(1/4+\epsilon)n$ asymmetric errors.
Note that a fraction of correctable symmetric errors for these codewords is then $\frac{1/4+\epsilon}{1 + \epsilon}=\frac{1}{4}+\frac{3\epsilon}{4(1+\epsilon)}$. Then  by Proposition~\ref{prop: classic plotkin bound} we have
	$$
	|\C_j''|\le 
		\left\lfloor1+\frac{1+\epsilon}{3\epsilon} \right\rfloor 
		=
		 \left\lfloor\frac{(1+4\epsilon)}{3\epsilon}\right\rfloor.
	$$
Set $j_0:=\frac{3}{2\sqrt{3\epsilon}}$. Clearly,  each codeword of $\C$ has weight within one of the following intervals $[0,\rho_{i_0}n]$, $[\rho_{i_0}n+1, \rho_{i_0}n+j_0\epsilon n]$, $[\rho_{i_0}n+j_0\epsilon n+1,\ldots, n]$. Note that $n-\rho_{i_0}n-j_0\epsilon n < \rho_{i_0}n $ because of the choice  $j_0$. Thus, one can partition the set of codewords  having weight in the range $[\rho_{i_0}n+j_0\epsilon n+1,\ldots, n]$ into at most $i_0$ subcodes such that each of them contains only codewords with weight in the range $[(1-\rho_{i+1})n+1,(1-\rho_i)n]$ for some $i\in[0,i_0-1]$. Then it is easy to argue that the size of each subcode is at most $\lfloor 1+1/(4\epsilon)\rfloor$ (similar to the inequality~\eqref{eq: first partitioning}).
Therefore, we obtain
\begin{align*}
&|\C|\\
\le &2 i_0 \left\lfloor 1+\frac{1}{4\epsilon}\right\rfloor + j_0\left\lfloor\frac{(1+4\epsilon)}{3\epsilon}\right\rfloor
\\
\le &\left(1+\frac{1}{4\epsilon}\right) \left(\frac{2}{\sqrt{3\epsilon}}+2\right)+\frac{4\epsilon+1}{3\epsilon} \frac{3}{2\sqrt{3\epsilon}}
\\
\le
&\frac{1}{\epsilon\sqrt{3\epsilon}}+\frac{1}{2\epsilon}+\frac{4}{\sqrt{3\epsilon}}+2. 
\end{align*}
The above arguments imply that $\tau_Z(M) = 1/4+ O(M^{-2/3})$.  Proposition~\ref{prop::construction for symmetric errors} says that $\tau_Z(M)=1/4+\Omega(M^{-1})$. Hence, $\tau_Z(M)\to 1/4$ as $M\to \infty$.
\end{proof}
\section{List-decodable codes for the Z-channel}\label{sec::list decodable codes}
In this section, we discuss the concept of list-decodable codes for asymmetric errors. For other error models, e.g. symmetric errors, bounds on the maximal achievable cardinality of list-decodable codes have been
extensively studied in recent years~\cite{blinovsky1986bounds,polyanskiy2016upper,zhang2020generalized,alon2018list}.
We derive upper and lower bounds on the maximal cardinality of list-decodable codes for the Z-channel, which appear to be useful for providing a two-stage encoding algorithm.

For a point $\x\in\{0,1\}^n$ and an integer $t\in[n]$, define the \textit{$Z$-ball} with center $\x$ and radius $t$ as 
$$
B_{Z}(\x,t) := \{\y\in\{0,1\}^n:\ \Delta(\y,\x)\le t,\ \Delta(\x,\y)=0\}.
$$
\begin{definition}
	We say that a code $\C\subset \{0,1\}^n$ is $(t,L)_{Z}$-\textit{list-decodable} if for any $\x\in\{0,1\}^n$, the ball $B_{Z}(\x,t)$ contains at most $L$ codewords from $\C$.
\end{definition}
For any code $\C\subset \{0,1\}^n$ and any list size $L$, define $t_L(\C)$ to be the maximum integer $t$ such that $\C$ is $(t,L)_{Z}$-list-decodable. Define the \textit{normalized $L$-radius} of the code $\C$ as $\tau_L(\C):=t_L(\C) / n$. By $A_{Z}(n,t,L)$ denote the maximal cardinality of an $(t,L)_{Z}$-list-decodable code of length $n$. Define the \textit{maximal asymptotic rate} of $(t,L)_{Z}$-list-decodable codes to be
$$
R_{Z}(\tau,L):=\limsup_{n\to\infty}\frac{\log A_{Z}(n,\lfloor \tau n\rfloor ,L)}{n}.
$$
For $L+1$ points $\x^{(1)},\ldots,\x^{(L+1)}\in \{0,1\}^n$, 
define their \textit{average radius} by
\begin{align*}
	&rad(\x^{(1)},\ldots,\x^{(L+1)})
	\\:= &\frac{1}{L+1}\min_{\substack{\y\in \{0,1\}^n \\ \Delta(\y,\x^{(j)})=0\ \forall j\in[L+1]}} \sum_{i=1}^{L+1} \Delta(\x^{(i)},\y).
\end{align*}
Clearly, the minimum is achieved on the vector $\y$ whose support is the intersection of the supports of vectors $\x^{(i)}$. Note that if $rad(\x^{(1)},\ldots,\x^{(L+1)})> t$, then for any $\y\in\{0,1\}^n$, the ball $B_{Z}(\y,t)$ does not contain all $\x^{(i)}$ with $i\in[L+1]$.

Let $\C\subset\{0,1\}^n$ be a fixed code with cardinality $M$. We enumerate all codewords of this code such that $\C=\{\x^{(1)},\ldots,\x^{(M)}\}$. For a subset $\L\in\binom{[M]}{L+1}$, define $\y_\L$ 
to be a binary vector whose support is the intersection of the supports of  $\x^{(i)}$ with $i\in \L$. 
For $\L\in\binom{[M]}{L+1}$ with $\L=\{i_1,\ldots, i_{L+1}\}$, let $rad(\L,\C):=rad(\x^{(i_1)},\ldots,\x^{(i_{L+1})})$. For a code $\C\subset\{0,1\}^n$, it is natural to define the \textit{ average $L$-radius} of the code $\C$, written as $rad_{L}(\C)$, to be the minimum $rad(\L,\C)$ over all $\L\in \binom{[M]}{L+1}$. Observe that if $rad_L(\C)>t$, then $\C$ is $(t,L)_{Z}$-list-decodable and $\tau_L(\C)\ge t/n$.
\subsection{Lower Bound on $R_{Z}(\tau,L)$}
\begin{theorem}[Random coding bound for list-decodable codes] \label{th::random coding bound}
Let $w$ be a fixed real number such that $0<w<1$ and $L_{up}$ be a fixed positive integer.	For any $R$, $0<R<h(w)$, and any sequence of positive integers $\{n_i\}$ with $\lim\limits_{i\to \infty} n_i = \infty$, there exist constant-weight codes $\{\C_i\}$, $\C_{i}\subset \{0,1\}^{n_i}$, such that for $i\to \infty$, it holds that 
\\		1) the rate $R(\C_i)=\frac{\log |\C_i|}{n_i}\ge R(1+o(1))$,
\\   2) the normalized $L$-radius 
$$\tau_L(\C_i)\ge \frac{rad_L(\C_i)}{n_i}\ge \tau^*(R,L,w) (1+o(1))
$$
for any $L\in[L_{up}]$, where
		$$
		\tau^*(R,L,w) := \sup\limits_{\substack{h_L>0 \text{ subject to}\\ g(h_L,L,w)-h_L \delta(h_L,L,w) \ge RL }} \delta(h_L,L,w),
		$$
		\begin{align*}
		g(h_L,L,w) := &-\log\left(\left(w2^{\frac{-h_L}{L+1}}+1-w\right)^{L+1}\right.
		\\
		&+w^{L+1}\left(1-2^{-h_L}\right)\bigg),
		\end{align*}
		\begin{align*}
		&\delta(h_L,L,w):=2^{g(h_L,L,w)}
		\left(w2^{\frac{-h_L}{L+1}} \right.\\
		&\quad\times \left.\left(w2^{\frac{-h_L}{L+1}}+1-w\right)^{L}-w^{L+1}2^{-h_L}\right),
		\end{align*}
\\		3) the normalized Hamming weight of all codewords of $\C_i$ is $w(1+o(1))$.
\end{theorem}
	\begin{remark}
	Given $R$ and $L$, define $\underline{\tau}(R,L):=\sup\limits_{0\le w\le 1} \tau^*(R,L,w)$. By taking the inverse function to $\underline{\tau}(R,L)$ we derive lower bounds on $R_{Z}(\tau,L)$ for $L=1,2,3,10$ and plot them in Figure~\ref{fig::asymptotic rate low bounds list}. We note that the lower bound on $R_{Z}(\tau,1)$ coincides with the well-known Gilbert-Varshamov bound.
\end{remark}
\begin{figure}[t]
  \centering
\begin{tikzpicture}[thick,scale=0.9]
\pgfplotsset{compat = 1.3}
\begin{axis}[
	legend style={nodes={scale=0.7, transform shape}},
	legend cell align={left},
	width = \columnwidth,
	height = 0.7\columnwidth,
	xlabel = {$\tau$, maximal fraction of errors},
	xlabel style = {nodes={scale=0.8, transform shape}},
	ylabel = {$R_Z(\tau,L)$, asymptotic rate},
	ylabel style={nodes={scale=0.8, transform shape}},
	xmin = 0,
	xmax = 1.0,
	ymin = 0.0,
	ymax = 1.0,
	legend pos = north east]

\addplot[color=green, mark=none,thick] table {1listRateLow.txt};
\addlegendentry{Lower bound on $R_Z(\tau,1)$}
\addplot[color=purple, mark=none,thick] table {2listRateLow.txt};
\addlegendentry{Lower bound on $R_Z(\tau,2)$}
\addplot[color= blue, mark=none,thick] table {3listRateLow.txt};
\addlegendentry{Lower bound on $R_Z(\tau,3)$}
\addplot[color= red, mark=none,thick] table {10listRateLow.txt};
\addlegendentry{Lower bound on $R_Z(\tau,10)$}
\end{axis}
\end{tikzpicture}
  \caption{Lower bounds on the asymptotic rate of binary list-decodable codes for the Z-channel}
  \label{fig::asymptotic rate low bounds list}
\end{figure}
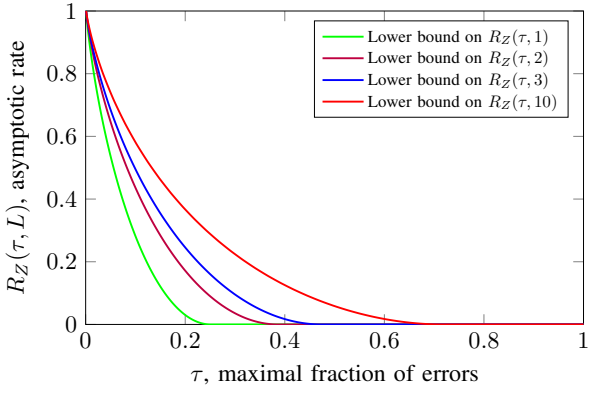
\begin{proof} Consider a random code $\C$ of size $M=2^{Rn}$ and length $n$ whose codewords are taken independently from $\{0,1\}^n$  such that each bit is $1$ with probability $w$ and $0$ with probability $1-w$. It may happen that $\C$ contains several copies of the same word.  Let $t_L$ be a positive integer which will be specified later. For any $\L\in\binom{[M]}{L+1}$, define a random variable $Y_\L$ that takes $0$ if $rad(\L,\C)> t_L$ and $1$ otherwise. We can think about $Y_\L$ as a function indicating that the set $\L$ is \textit{bad}. Indeed, if $Y_\L=1$, then it might happen that there is a Z-ball with radius $t_L$ containing the codewords indexed by $\L$. On other hand, if $Y_\L=0$, then the codewords indexed by $\L$ cannot lie in a Z-ball with radius $t_L$.  Let the total number of bad sets be defined as 
$$
	Y_L:=\sum_{\L\in \binom{[M]}{L+1}} Y_\L.
	$$ 
	If we throw away from $\C$ one codeword from each bad $\L \in\binom{[M]}{L+1}$, we obtain a $(t_L,L)_{Z}$-list-decodable code. Note that random variables $Y_{\L}$ for all $\L\in \binom{[M]}{L+1}$ have the same distribution. From the Markov inequality $\Pr\{Y_L\ge 2^{L} \E [{Y}_L]\}\le \frac{1}{2^L}$, it follows that with a positive probability there exists a code $\C'$ with cardinality 
	\begin{align*}
	&M- \sum_{L=1}^{L_{up}} 2^{L} \E [{Y}_L]\\
	= &M-\sum_{L=1}^{L_{up}} 2^{L} \binom{M}{L+1}\E[{Y}_{\{1,\ldots,L+1\}}] \\
	\ge &M - \sum_{L=1}^{L_{up}} M^{L+1} \E[{Y}_{\{1,\ldots,L+1\}}] 
	\end{align*} 
	which is  $(t_L,L)_{Z}$-list-decodable for all $L\in[L_{up}]$. In what follows, we find the conditions sufficient for
	\begin{equation}\label{eq::condition on size and t}
	M^{L+1} \E[{Y}_{\L}] \le M/2^{L+1}  \iff  2^{RnL} \E[{Y}_{\L}]\le 1 / 2^{L+1},
	\end{equation}
	which will imply the existence of a code $\C'$ of size $M/2$. Given $L$, let us fix $\L=\{1,\ldots,L+1\}$.
	It remains to estimate $\E [{Y}_{\L}]$ and choose $t_L$ appropriately. To this end, we consider a random variable $rad(\L,\C)$ that can be represented as a sum of $n$ independent copies of a random variable $\xi_{L}$ with
	$$
	\Pr\left\{\xi_{L}=\frac{i}{L+1}\right\}=
	\begin{cases}
	w^{L+1}+(1-w)^{L+1},\quad&\text{for }i=0, \\
	\binom{L+1}{i}w^{i}(1-w)^{L+1-i},\quad &\text{for } i\in[L].
	\end{cases}
	$$
	Note that $\E[ {Y}_{\L}] = \Pr\{rad(\L,\C) \le t_L\}$. By the Chernoff bound, for any $h_L,\alpha_L>0$, the random variable $\eta_L:=rad(\L,\C)$ deviates from its expected value $\E[{\eta_L}]$ with probability
	\begin{align*}
	&\Pr\{\E[{\eta_L}] - \eta_L  \ge \alpha_L n\}\\ 
	\le &2^{-\alpha_L h_L n} \E[ 2^{h_L(\E[\eta_L] - \eta_L)}] \\
	= &2^{h_L\E [\eta_L] -\alpha_L h_L n } (\E[2^{-h_L\xi_{L}}])^n.
	\end{align*}
	First, observe that 
	\begin{align*}
	\E[\eta_L] &= n \E[\xi_{L}] = \frac{n}{L+1} \sum_{i=1}^{L}i\ \binom{L+1}{i}w^i(1-w)^{L+1-i}\\
	&= nw (1-w^L).
	\end{align*}
	Second, we check that 
	\begin{align*}
	&\E[2^{-h_L\xi_{L}}]=w^{L+1}+(1-w)^{L+1}\\
	&\quad+\sum_{i=1}^{L} 2^{-\frac{h_L i}{L+1}}\binom{L+1}{i}w^i (1-w)^{L+1-i}\\
	=&(w2^{\frac{-h_L}{L+1}}+1-w)^{L+1} + w^{L+1}(1-2^{-h_L}).
	\end{align*}
	Thus, we obtain that
	\begin{align*}
	&\Pr\{\eta_L \le \E[{\eta_L}] - \alpha_L n\} \\
	\le &2^{-n\left(h_L(\alpha_L - (w-w^{L+1}))-\log\left(\E [2^{-h_L\xi_{L}}] \right)\right)}.
	\end{align*}
	Define $g(h_L,L,w):=-\log\left(\E [2^{-h_L\xi_{L}}] \right)$. To get a stronger estimate, we optimise the right-hand side of the above inequality over the choice of $h_L>0$. It is clear that the minimum is attained at $h_L$ satisfying
	\begin{align*}
&(w-w^{L+1}) - \alpha_L 
= \frac{\partial g(h_L,L,w) }{\partial h_L}
\\
=&2^{g(h_L,L,w)}
		\left(w2^{\frac{-h_L}{L+1}}\left(w2^{\frac{-h_L}{L+1}}+1-w\right)^{L}-w^{L+1}2^{-h_L}\right).
	\end{align*}
	Denote the right-hand side of the above equation by $\delta(h_L,L,w)$. In the following, we set  
	$$
	\alpha_L= \alpha_L(h_L):=(w-w^{L+1}) - \delta(h_L,L,w).
	$$
	Let 
	$$
	t_L:=\E[\eta_L] -n  \alpha_L = n \delta(h_L,L,w).
	$$
	Then we derive
	\begin{align*}
	\E[Y_{\L}] &=\Pr\{\eta_L\le t_L\}\\
	&=\Pr\{\eta_L\le \E[\eta_L]-\alpha_L n\}\\
	&\le 2^{-n(g(h_L,L,w)-h_L\delta(h_L,L,w))}.
	\end{align*}
	To have~\eqref{eq::condition on size and t}, we need to have $h_L$ such that
	$$
	g(h_L,L,w) - h_L \delta(h_L,L,w) > RL + (L+1)/n.
	$$
	Given $0<R<1$, we want to maximize the quantity $t_L$. This asymptotically implies that 
	$$
	rad_L(\C')\ge n \tau^*(R,L,w)(1+o(1)).
	$$
	
	 To guarantee that all words $\C'$ have the same normalized Hamming weight close to $w(1+o(1))$ as $n\to\infty$, we can additionally pre-process the set $\C$. This can be done by the standard techniques, e.g., by the Hoeffding inequality, a large fraction of generated words has an appropriate Hamming weight and, thus, we can throw all other words away to get a proper $\C'$.  This completes the proof.
\end{proof}
\begin{corollary}\label{cor::important lower bound}
	Let  $w$ be a real number such that $0< w < 1$ and $L_{\text{up}}$ be a positive integer. Fix any $\epsilon$ such that $0<\epsilon<w -w^2$. Then there exists $R=R(\epsilon, L_{up})>0$ and $n_0=n_0(\epsilon, L_{up})$ such that the following holds: for any $n\ge n_0$, there exists a $\overline{w}$-constant-weight code of length $n$ and size at least $2^{Rn}$ whose normalized $L$-radius is at least $w - w^{L+1}-\epsilon$ for all $L\in[L_{\text{up}}]$. Moreover, $\overline{w}\in(w - \epsilon, w+\epsilon)$. 
\end{corollary}
\begin{proof}
We apply Theorem~\ref{th::random coding bound} and use the notation introduced in that statement.	Define $f(h_L,L,w):=g(h_L,L,w)-h_L\delta(h_L,L,w)$. Note that $g(0,L,w)= 0$, $\delta(0,L,w)=w - w^{L+1}$ and, thus, $f(0,L,w)=0$. 

First, we shall prove that $f(h_L,L,w)>0$ for small enough $h_L>0$. Since the function $\delta(h_L,L,w) = \frac{\partial g}{\partial h_L}$, the derivative $\frac{\partial f}{\partial h_L}=-h_L\frac{\partial \delta}{\partial h_L}$ and it suffices to check that $\frac{\partial \delta}{\partial h_L} < 0$ at point $0$. Using
a symbolic computation package it is easy to verify that
	for any $w\in(0,1)$, the derivative $\frac{\partial \delta}{\partial h_L}$ at point $0$ has the same sign as
	\begin{align*}
	&\frac{-w-Lw^2}{L+1}+w^{L+1}+(w-w^{L+1})^2\\
	=&
	\frac{-w+w^2}{L+1} + w^{L+1}(1-2w + w^{L+1})\\
	<&\frac{-w(1-w)}{L+1} + w^{L+1}(1-w)^2
	\\
	=&(1-w)\left(w^{L+1}-w^{L+2}-\frac{w}{L+1}\right)
	\\
	=&(1-w)w^{L+1}\left(1-w-\frac{w^{-L}}{L+1}\right)
	\\
	<&\,
0.
	\end{align*}
	Since the function $f$ is continuously differentiable, the above arguments yield that $f(h_L,L,w)$ is monotonously increasing when $h_L\in [0, h_L']$ for some real number $h_L'>0$. 
	Let $D_{L,w}(R)$ denote the region of feasible $h_L$, i.e., $D_{L,w}(R):=\{h_L>0: f(h_L,L,w)\ge R L\}$. Then the infimum of the set $D_{L,w}(R)$ converges to $0$ as $R\to 0$. By definition,
	$$
		\tau^*(R,L,w) = \sup\limits_{h_L\in D_{L,w}(R)} \delta(h_L,L,w).
	$$
	Therefore,  
	$$
	\lim\limits_{R\to 0}\tau^*(R,L,w)\ge \delta(0,L,w) = w - w^{L+1}.
	$$
	Since all the functions used in the definition of $\tau^*(R,L,w)$ are continuous, the required statement follows.
\end{proof}
\begin{remark}
 For a fixed positive integer $L$, Corollary~\ref{cor::important lower bound} implies the existence of positive-rate $(\tau n, L)_Z$-list-decodable codes for any $\tau< \max\limits_{0<w<1}(w - w^{L+1}) =\frac{L}{(L+1)^{\frac{L+1}{L}}}$. For $L=1,2,3,10$, one can see in Figure~\ref{fig::asymptotic rate low bounds list} that $\frac{L}{(L+1)^{\frac{L+1}{L}}} = 0.25, 0.385, 0.473, 0.715$. For $L\to\infty$, the largest relative list-decoding radius of exponential-sized codes for the $Z$-channel converges to $1$. Recall that for symmetric errors, a similar limit is $1/2$.
\end{remark}
\subsection{Upper Bounds on $R_{Z}(\tau,L)$}
We prove a Plotkin-type bound on the number of codewords in a list-decodable code for the Z-channel.
\begin{lemma}[Plotkin-type bound for list-decodable codes]\label{lem::plotkin bound list decoding}
	Let $\C\subseteq \{0,1\}^n$ be a code of size $M$ whose codewords have the  Hamming weight $w n$. If $\C$ is $(\tau n,L)_{Z}$-list-decodable with $\tau>w - w^{L+1}$, then 
	$$
	\frac{M^L}{(M-1)\ldots (M-L)}\ge \frac{\tau}{w - w^{L+1}}.
	$$
This implies that for any $\epsilon>0$ and $n\to\infty$, the rate of $w$-constant-weight codes of length $n$ with the relative $L$-radius $w-w^{L+1}+\epsilon$ vanishes.
\end{lemma}
\begin{proof}
     Let $M$ denote the number of codewords in the code $\C=\{\x^{(1)},\ldots,\x^{(M)}\}$. For a multiset $\mathcal{L}=\{i_1,\ldots,i_{L+1}\}\subset[M]$ of size $L+1$, define $\y_{\mathcal{L}}$ to be a word whose support is the intersection of supports of $\x^{(i_j)}$, $j\in[L+1]$. To prove an upper bound on $M$, we provide a standard double counting arguments. Consider the summation
\begin{equation}\label{eq::summation}
\sum_{\substack{\mathcal{L}\subset[M]^{L+1}
		\\ \mathcal{L}=(i_1,\ldots,i_{L+1})}}\sum_{j=1}^{L+1}d_H(\x^{(i_j)},\y_{\mathcal{L}}).
\end{equation}
 Note that if all elements of $\mathcal{L}$ are distinct, then the constant weight and the $(\tau n,L)_{Z}$-list-decodability property imply that $\Delta(\x^{(i_j)},\y_{\mathcal{L}})=d_H(x^{(i_j)},\y_{\mathcal{L}})\ge \tau n$ for all $j\in[L+1]$. Thus, the summation is at least $M(M-1)\dots (M-L) (L+1) \tau n$. On the other hand, we have
\begin{align*}
&\sum_{\substack{\mathcal{L}\subset [M]^{L+1}
		\\ \mathcal{L}=(i_1,\ldots,i_{L+1})}}\sum_{j=1}^{L+1} d_H(\x^{(i_j)},\y_{\mathcal{L}}) 
		\\= &(L+1)\sum_{\substack{\mathcal{L}\in [M]^{L+1}
		\\ \mathcal{L}=(i_1,\ldots,i_{L+1})}}\left(w n - \sum_{k=1}^n\prod_{j=1}^{L+1} \mathbb{1}\{x^{(i_j)}_k=1\}\right)
\\=&(L+1)M^{L+1} w n- (L+1)\sum_{k=1}^n S_k^{L+1},
\end{align*}
where $S_k$ denotes the number of codewords having $1$ at position $k$. Recall that $\sum_{k=1}^n S_k= M w n $ and $0\le S_k\le M $. 
It is easy to check that for the integer vector $\v=(S_1,\ldots,S_n)$, the norm inequality  $\left\|\v\right\|_p\le n^{1/p-1/q}\left\|\v\right\|_q$ with $0<p<q$ implies
\begin{align*}
\sum_{k=1}^n S_k^{L+1} &= \left(\left\|\v\right\|_{L+1}\right)^{L+1}
\\ &\ge (\left\|\v\right\|_{1}n^{1/(L+1) - 1})^{L+1} 
\\
&\ge (Mw n)^{L+1} n^{-L}.
\end{align*}
Finally, combining lower and upper bounds on the summation~\eqref{eq::summation} yield
$$
(M-1)\dots (M-L) \tau \le M^L w  - M^L w^{L+1}.
$$
This completes the proof.
\end{proof}
\section{Two-stage encoding algorithm}\label{sec::2 stage algorithm}
In Section~\ref{ss::encoding strategy}, we present a two-stage encoding algorithm that combines random list-decodable codes from Section~\ref{sec::list decodable codes} and high-error low-rate codes described in Section~\ref{sec::high error low rate codes}. In Section~\ref{sec: plotkin type point for two-stage encoding strategies}, we precisely characterize when exponential-sized (or positive-rate) codes exist for the two-stage model. 
\subsection{Encoding strategy}\label{ss::encoding strategy}
Fix a positive integer $L_{\text{up}}$, and real numbers $\epsilon>0$ which will be specified later. Let $\tau$ denote the fraction of errors, and $n$ be the total number of channel uses. For some $\alpha$, $0<\alpha <1$, define integers $n_1:=\alpha n$ and $n_2:=(1-\alpha)n$ which correspond to the number of channel uses at the first and second stages. Let $M$ be the total number of messages; $m\in[M]$ be a message that the sender wishes to send;  $\y_1$ denote the received string after the first stage; $\x_1=\x_1(m)$ and $\x_2=\x_2(m,\y_1)$ be strings transmitted by the sender at the first and second stages.  Let $w$ be a weight real parameter such that $0<w < 1$ and $R_1,R_2$ denote the rate of a code used at the first stage and the second stage, which will be specified later.  Define $M:=2^{R_1 n_1}=2^{R_1\alpha n}$. 

\textbf{First stage:}
By Theorem~\ref{th::random coding bound}, for any $\epsilon>0$, there exists a sufficiently large $n^*(w,\epsilon,L_{up}, R_1)$ such that for all $n_1>n^*(w,\epsilon,L_{up}, R_1)$, there exists a $w_1$-constant-weight code $\C\subset\{0,1\}^{n_1}$ of size $|\C|=M=2^{R_1n_1}$, which has the normalized $L$-radius at least $\tau^*(R_1,L,w)-\epsilon$ for any $L\in [L_{up}]$, where the weight parameter  $w_1\in(w-\epsilon, w+\epsilon)$.  Then for any message $m\in [M]$, the sender transmits the $m$th codeword, written as $\x_1$, of this code $\C$. 
Suppose that $\tau_1 n_1$ errors occur at the first $n_1$ channel uses and, hence, at most $\tau n-\tau_1 n_1=:\tau_2 n_2$ errors will happen in the remaining $n_2$ channel uses. Since an error may happen only when a one is transmitted, the received word, denoted as $\y_1$, has the Hamming weight $w_1 n_1 - \tau_1 n_1$.  Thus, the values $\tau_1 n_1$ and $\tau_2 n_2$ can be easily computed by the receiver. 

\textbf{Second stage:}  To describe the process of encoding at the second stage, we distinguish two cases.

\textit{1st case:} It holds that $0\le \tau_1 \le \tau^*(R_1,L_{up},w)-\epsilon$. Thus, $\tau^*(R_1,L-1,w)-\epsilon< \tau_1\le \tau^*(R_1,L,w)-\epsilon$ for some $L\in[L_{up}]$\footnote{For simplicity of notation, we make the assumption that for $L=0$, $0\le R\le 1$ and $0\le w\le 1$, $\tau^*(R,L,w)=0$}. In this case,  both the sender and the receiver can reconstruct up to $L$ candidate messages based on the output $\y_1$. To distinguish the original message from the $L$ candidates, the sender uses a high-error low-rate code of size $L$ from Proposition~\ref{prop:trivial construction} at the second stage. If $\tau_2 \le \tau_Z(L)-\epsilon$, the receiver decodes the message correctly. Observe that $\tau = \alpha\tau_1+ (1-\alpha)\tau_2$. Combining the above arguments, we come to the condition which is sufficient for error-free decoding in the first case
\begin{equation}\label{eq: first cond}
\tau\le \alpha(\tau^*(R,L-1,w) - \epsilon) + (1-\alpha) (\tau_Z(L)-\epsilon), \quad \forall L\in [L_{up}].
\end{equation}
 
\textit{2nd case:} It holds that $\tau^*(R_1,L_{up},w) -\epsilon < \tau_1 \le \min(w_1,\tau/\alpha)$.  In this case, the number of candidate messages consistent with the output $\y_1$ can be exponential in $n_1$. Now we estimate the corresponding exponent. To this end, we need to compute the number of codewords of $\C$ whose supports share the support of $\y_1$. Define $S:=\s(\y_1)$ with $|S|=w_1 n_1-\tau_1 n_1$. Form the shortened code $\C'$ that  contains all possible $\z\in\{0,1\}^{n_1(1-w_1+\tau_1)}$ with the property:  there exists $\x=\x(\z)\in\C$ such that $S\subset \s(\x)$  and $\z = \x|_{[n_1]\setminus S}$.
    Clearly, $\C'$ is $((\tau^*(R_1,L,w)-\epsilon)n_1,L)_Z$-list-decodable and contains codewords whose normalized Hamming weight is $\tau_1/(1-w_1+\tau_1)$. Taking into account the property that $\C'$ can correct up to $(\tau^*(R_1,1,w)-\epsilon)n_1$ errors, we conclude by Proposition~\ref{th::list size when large radius} that
    \begin{align*}
    &\frac{\log |\C'|}{n_1(1-w_1 +\tau_1)} \le h\left(\frac{\tau_1}{1-w_1+\tau_1}\right) \\ &- h\left(\frac{1-\sqrt{1-4(\tau^*(R_1,1,w)-\epsilon) / (1+w_1-\tau_1)}}{2}\right) +o(1).
    \end{align*}
    Define $R_2=R_2(\alpha,\tau_1,w,w_1, R_1,\epsilon)$ as follows
    \begin{align}\nonumber
    &R_2(\alpha,\tau_1,w,w_1, R_1,\epsilon)\\
    := &\frac{\alpha (1-w_1+\tau_1)}{1-\alpha} \left( h\left(\frac{\tau_1}{1-w_1+\tau_1}\right)  \right. \\ -&\left.h\left(\frac{1-\sqrt{1-4(\tau^*(R_1,1,w)-\epsilon) / (1+w_1-\tau_1)}}{2}\right) \right). \label{eq: rate at the second stage}
    \end{align}
    At the second stage, the sender transmits a codeword of a random code with rate $R_2$ whose codewords have the normalized Hamming weight $1/2$. The receiver decodes the message correctly if $\tau_2\le \tau^*(R_2,1,1/2)-\epsilon$. Since $\tau = \alpha\tau_1+ (1-\alpha)\tau_2$, we come to the following condition allowing error-free transmission
 \begin{equation}\label{eq: second cond}
		\tau \le \alpha (\tau^*(R_1,L_{up},w)- \epsilon) + (1-\alpha )(\tau^*(R_2,1,1/2)-\epsilon).
	\end{equation}

If both conditions~\eqref{eq: first cond} and~\eqref{eq: second cond} are satisfied for all possible  $\tau_1$ with $0\le \tau_1 \le \min(w_1,\tau/\alpha)$, then the proposed encoding scheme transmits $2^{R_1\alpha n}$ messages and can correct up to $\tau n$ asymmetric errors. By taking $\epsilon\to 0$, we derive the following statement.
\begin{theorem}\label{th: lower bound for two-stage encoding schemes}
For any positive integer $L_{up}$, the maximal asymptotic rate of two-stage error-correcting codes for the Z-channel satisfies
$$
R_{Z}^{(2)}(\tau)\ge \sup_{0\le w\le 1} \sup_{0\le \alpha\le 1}\sup_{\substack{0\le R \le h(w)\\ \text{subject to }(*)}} \alpha R,
$$
where the condition $(*)$ means that for any $x\in(0,\min(w,\tau/\alpha))$, it holds
\begin{enumerate}
    \item if $\tau^*(R,L-1,w)\le x\le \tau^*(R,L,w)$ for $L\in[L_{up}]$, then $(\tau-\alpha x)/(1-\alpha)\le \tau_Z(L)$.
    \item if $\tau^*(R,L_{up},w)\le x \le \min(w,\tau/\alpha)$, then $(\tau-\alpha x)/(1-\alpha) \le \tau^*(R_2,1,1/2)$ for $R_2$ being computed as $R_2(\alpha,x,w,w,R,0)$ in~\eqref{eq: rate at the second stage}.
\end{enumerate}

\end{theorem}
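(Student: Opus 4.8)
The statement is an achievability (coding) theorem, so the plan is to verify that the two-stage scheme of Section~\ref{ss::encoding strategy} actually corrects every admissible error pattern and delivers the claimed number of messages, and then to optimize over the free parameters. Concretely, I would fix a triple $(\omega,\alpha,R)$ with $0\le\omega\le1$, $0\le\alpha\le1$ and $0\le R\le h(\omega)$ satisfying condition~$(*)$, and exhibit a two-stage code of length $n$ with $M=2^{R\alpha n}$ codewords that corrects $\lfloor\tau n\rfloor$ asymmetric errors up to $o(1)$ losses; since the rate of such a code is $\log M/n=\alpha R$, letting $n\to\infty$ and taking the supremum over all feasible $(\omega,\alpha,R)$ yields the asserted lower bound on $R_Z^{(2)}(\tau)$. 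For the first $n_1=\alpha n$ channel uses I would invoke Theorem~\ref{th::random coding bound} to obtain a constant-weight code $\C\subset\{0,1\}^{n_1}$ of rate $R$, normalized weight $\omega$, and normalized $L$-radius at least $\tau^*(R,L,\omega)+o(1)$ simultaneously for every $L\in[L_{up}]$.

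The crux is that the single use of feedback reveals the first-stage noise level. Because every codeword of $\C$ has weight $\omega n_1$ and the Z-channel can only turn ones into zeros, the received prefix $\y_1$ has weight $(\omega-\tau_1)n_1$, so both Alice (who sees $\y^{n_1}$) and Bob recover $\tau_1$ exactly from $\wt(\y_1)$. Writing $x:=\tau_1$, the residual budget for the second stage is $\tau_2 n_2$ with $\tau_2=(\tau-\alpha x)/(1-\alpha)$, which is nonnegative precisely on the range $0\le x\le\min(\omega,\tau/\alpha)$ (the first bound because at most $\omega n_1$ ones can be erased, the second because the total budget is $\tau n$). The set of messages consistent with $\y_1$ is exactly $\{m:\s(\y_1)\subseteq\s(\x_1(m))\}$, which both parties can compute, so the second-stage task is to disambiguate this candidate set with a length-$n_2$ code tolerating a fraction $\tau_2$ of asymmetric errors.

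I would then split on $x=\tau_1$, matching the two items of~$(*)$. When $0\le x\le\tau^*(R,L_{up},\omega)$, choose the index $L\in[L_{up}]$ with $\tau^*(R,L-1,\omega)\le x\le\tau^*(R,L,\omega)$; the $L$-list-decodability of $\C$ bounds the candidate set by $L$, and I would assign its members distinct codewords of a size-$L$ high-error code from Proposition~\ref{prop:trivial construction}, which succeeds because the first condition in~$(*)$ gives $\tau_2\le\tau(L)$ (up to the $\epsilon$ that proposition permits). When $\tau^*(R,L_{up},\omega)<x\le\min(\omega,\tau/\alpha)$, the candidate set may be exponential; here I would shorten $\C$ on $S=\s(\y_1)$ to the constant-weight code $\C'$ of length $(1-\omega+\tau_1)n_1$ and normalized weight $\tau_1/(1-\omega+\tau_1)$, note that $\C'$ inherits the unique-decoding radius $\tau^*(R,1,\omega)n_1$, and apply the large-radius bound of Proposition~\ref{th::list size when large radius} to conclude $\log|\C'|\le R_2 n_2(1+o(1))$. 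A random weight-$1/2$ code of rate $R_2+o(1)$ then separates the candidates, and the second condition in~$(*)$ forces $\tau_2\le\tau^*(R_2,1,1/2)$, so decoding succeeds.

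Finally I would argue that these guarantees hold uniformly over all realizable $\tau_1$. Since $\tau_1 n_1$ is integer-valued there are finitely many cases for each $n$, and they fall into the $L_{up}+1$ regimes above, so one first-stage code together with a finite family of second-stage codes handles them all with a common $o(1)$. The main obstacle I anticipate is making Case~2 fully rigorous: one must check that $\C'$ is genuinely constant-weight and retains enough decodability to legitimately invoke Proposition~\ref{th::list size when large radius}, and that the rate $R_2=R_2(\alpha,\tau_1,\omega,R)$ varies tamely enough in $\tau_1$ that a single second-stage code (of rate slightly above the worst-case $R_2$ on each sub-interval) suffices to cover a whole regime. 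The remaining bookkeeping---absorbing the $\epsilon$ of Proposition~\ref{prop:trivial construction} and the $o(1)$ of the random-coding and large-radius bounds into the supremum over the open feasible region defined by~$(*)$---is routine.
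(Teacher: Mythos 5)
Your proposal is correct and follows essentially the same route as the paper, whose proof of this theorem is precisely the encoding strategy of Section~\ref{ss::encoding strategy}: a random constant-weight first-stage code from Theorem~\ref{th::random coding bound} with normalized $L$-radius $\tau^*(R,L,\omega)+o(1)$ for all $L\in[L_{up}]$, recovery of $\tau_1$ from $\wt(\y_1)$, and the same two-case second stage (a size-$L$ high-error code from Proposition~\ref{prop:trivial construction} when $\tau_1\le\tau^*(R,L_{up},\omega)$, and the shortened-code bound via Proposition~\ref{th::list size when large radius} together with a rate-$R_2$ random code otherwise). The uniformity-over-$\tau_1$ and $o(1)$ bookkeeping you flag as the main remaining obstacle is likewise left implicit in the paper.
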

Unfortunately, we don't know how to get a closed form for this lower bound. But, using Table~\ref{table:1}, we are able to compute this bound numerically with $L_{up}=18$. In Figure~\ref{fig::asymptotic rate}, we compare our results for two-stage  encoding schemes to the known results for one-stage (non-adaptive) and fully adaptive error-correcting codes for the Z-channel. Recall that by~\cite{bassalygo1965new,borden1983low}, the asymptotic rate of one-stage codes correcting a fraction $\tau$ of asymmetric and symmetric errors is the same. Thereby, for the non-adaptive setting, one can use the Gilbert-Varshamov lower bound~\cite{varshamov1957estimate,gilbert1952comparison} and the McEliece-Rodemich-Rumsey-Welch upper bound~\cite{mceliece1977new} on the rate $R_{Z}(\tau,1)$. The work~\cite{deppe2020coding} established a lower bound on the rate for fully adaptive encoding strategies. One can check that exponential-sized (or positive-rate) codes exist (i) when $\tau<0.25$ for the non-adaptive setting, (ii) when $\tau<\tau_{\max}\approx 0.44$ for the two-stage setting (see Theorem~\ref{th: Plotkin point for two-stage schemes}), (iii) when $\tau<1$ for the fully adaptive regime. 
\begin{figure}[t]
  \centering
\begin{tikzpicture}[thick,scale=0.9]
\pgfplotsset{compat = newest}
\begin{axis}[
    legend pos = {south east},
	legend style={nodes={scale=0.7, transform shape}},
	legend cell align={left},
	width = \columnwidth,
	height = 0.7\columnwidth,
	xlabel = {$\tau$, fraction of errors},
	xlabel style = {nodes={scale=0.8, transform shape}},
	ylabel = {Asymptotic rate},
	ylabel style={nodes={scale=0.8, transform shape}},
	xmin = 0,
	xmax = 1.0,
	ymin = 0.0,
	ymax = 1.0,
	legend pos = north east]

\addplot[color=green, mark=none,thick] table {upBound.txt};
\addlegendentry{One-stage (upper bound),~\cite{mceliece1977new}}
\addplot[color=blue, mark=none,dashed,thick] table {GVBound.txt};
\addlegendentry{One-stage (lower bound),~\cite{gilbert1952comparison,varshamov1957estimate}}
\addplot[color= red, mark=none,dotted,thick] table {newRate.txt};
\addlegendentry{Two-stage (lower bound), Th.~\ref{th: lower bound for two-stage encoding schemes}}
\addplot[color= violet, mark=none,dashdotted,thick] table {low_feedback.tex};
\addlegendentry{Fully adaptive (lower bound), \cite{deppe2020coding}}
 \end{axis}
\end{tikzpicture}
  \caption{Asymptotic rate of error-correcting codes for the Z-channel. Three levels of adaptivity are considered: one-stage (non-adaptive), two-stage and fully adaptive encoding algorithms.}
  \label{fig::asymptotic rate}
\end{figure}
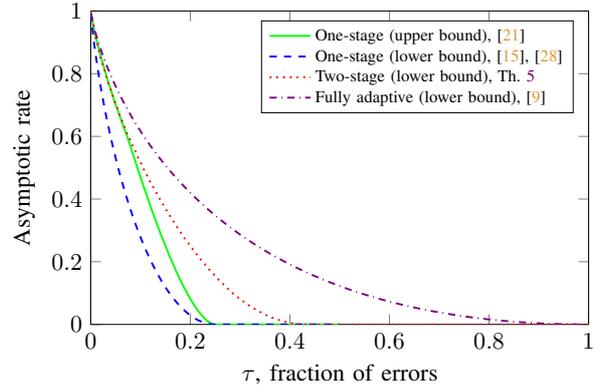

\subsection{Analysis of the Plotkin-type point}\label{sec: plotkin type point for two-stage encoding strategies}

Define 
	\begin{equation}\label{eq::tau max}
		\tau_{\max}:=\max\limits_{0<w<1}\frac{w + w^3}{1+4w^3}\approx 0.44.
	\end{equation}
	Let $w_{\max}\approx 0.66$ be the argument achieving the maximum in the above equation and $\alpha_{\max} := (1+4w_{\max}^3)^{-1}\approx 0.46$.

\begin{theorem}\label{th: Plotkin point for two-stage schemes}
 The rate $R_{Z}^{(2)}(\tau)>0$ for all $\tau<\tau_{\max}$ and  $R_{Z}^{(2)}(\tau)=0$ for $\tau>\tau_{\max}$.
\end{theorem}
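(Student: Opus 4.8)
The plan is to establish the theorem in two halves: a positive-rate (achievability) direction for $\tau<\tau_{\max}$, and a zero-rate (converse) direction for $\tau>\tau_{\max}$. The value $\tau_{\max}=\max_{0<\omega<1}\frac{\omega+\omega^3}{1+4\omega^3}$ strongly suggests that the critical case of the two-stage strategy is captured by taking $L_{up}=2$ (so list sizes $L\in\{1,2\}$) and using the simplified list-radius $\tau_L(\omega)=\omega-\omega^{L+1}$ from Remark~\ref{rem::important lower bound} rather than the full expression $\tau^*(R,L,\omega)$. Indeed, with $L=2$ one has $\tau_2(\omega)=\omega-\omega^3$, and the denominators $1+4\omega^3$ and numerators $\omega+\omega^3$ are exactly what one gets by balancing a first-stage list-decoding radius against a second-stage Plotkin-type threshold for a code of size $L=2$. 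So the first thing I would do is identify, for the achievability direction, the single ``worst'' value of $\tau_1$ (the first-stage error fraction) that the adversary would choose, and show that the two-stage scheme survives it precisely when $\tau<\tau_{\max}$.

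\textbf{Achievability ($\tau<\tau_{\max}$).} Here I would instantiate the encoding strategy of Section~\ref{ss::encoding strategy} with $L_{up}=2$ and analyze it in the \emph{rate-tending-to-zero} regime, i.e. take $R_1\to 0^+$ so that $\tau^*(R_1,L,\omega)\to \tau_L(\omega)=\omega-\omega^{L+1}$. In this limit the two decoding thresholds become clean: if the fraction of first-stage errors satisfies $\tau_1\le \omega-\omega^2$ the list has size $L=1$ (unique decoding already); if $\omega-\omega^2\le\tau_1\le\omega-\omega^3$ the list has size $L=2$; and beyond that we fall into the exponential-list case. The adversary maximizes its remaining budget on the second stage by choosing $\tau_1$ as small as possible subject to forcing a nontrivial list, so the binding constraint comes from the boundary $\tau_1=\omega-\omega^3$ with an $L=2$ list, requiring a size-$L=2$ second-stage code correcting a fraction $(\tau-\alpha\tau_1)/(1-\alpha)$ of asymmetric errors. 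By Proposition~\ref{prop:trivial construction} a size-$2$ code corrects up to $\tau(2)=1$, but the true tightness will instead come from the size-$L=2$ constraint $\tau(2)$ traded against the list-decoding rate; carrying out the optimization over $\alpha$ and $\omega$ (with $R_1\to 0$) should reduce exactly to the inequality $\tau<\frac{\omega+\omega^3}{1+4\omega^3}$, and then taking the sup over $\omega$ gives $\tau<\tau_{\max}$. I would present this as: choose $\omega=\omega_{\max}$, pick $\alpha$ so that both decoding conditions of the preceding theorem hold strictly, and then for small enough $R_1>0$ the achievable product $\alpha R_1$ is strictly positive.

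\textbf{Converse ($\tau>\tau_{\max}$).} For the impossibility direction I would argue directly against an arbitrary two-stage scheme. Suppose Alice has super-constantly (say $\omega(1)$, ultimately exponentially) many messages; I want to exhibit an adversary strategy forcing a decoding error. The key tool is Lemma~\ref{lem::plotkin bound} together with Corollary~\ref{cor::constant weight codes}: if the first-stage code (restricted to codewords of a dominant weight $\omega n_1$, by pigeonhole over $O(n)$ weight classes) has more than a bounded number of codewords, then for $\tau_1>\omega-\omega^{L+1}$ its $(\tau_1 n_1,L)_Z$-list-decodable cardinality is bounded, i.e. the adversary can force a list of size $>L$ by inflicting a suitable number of $1\to 0$ errors on the first stage. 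The adversary picks the weight $\omega$ and first-stage error level to leave a large list while reserving as much of the budget $\tau n$ as possible for the second stage; the second stage then must be a low-rate code of size equal to the list, and by Lemma~\ref{lem::plotkin bound for the Z channel} (the $O(\epsilon^{-3/2})$ Plotkin-type bound) it can correct only up to a fraction approaching $1/4$ of asymmetric errors before its size collapses to a constant. Balancing ``budget spent forcing the list'' against ``budget remaining to defeat the second stage'' yields precisely the extremal ratio $\frac{\omega+\omega^3}{1+4\omega^3}$, and $\tau>\tau_{\max}$ makes this balance infeasible, forcing the overall number of distinguishable messages to be bounded and hence $R_Z^{(2)}(\tau)=0$.

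\textbf{The main obstacle} I anticipate is the converse, specifically making the adversary's two-stage argument fully rigorous against an \emph{adaptive} second-stage encoder: the list produced by the first stage is not a fixed code but depends on $\y_1$, and Alice's second-stage codewords can be tailored to the surviving list. I would handle this by fixing the adversary's first-stage behavior so that the surviving list and the residual budget $\tau_2 n_2$ are determined, and then invoking Lemma~\ref{lem::plotkin bound for the Z channel} on whatever second-stage code Alice uses for that list — since that bound holds for \emph{any} code, it applies uniformly regardless of adaptivity. The delicate quantitative point is verifying that the extremal $\omega$ simultaneously optimizes the list-forcing cost on stage one (via $\omega-\omega^{L+1}$ with $L=2$, giving the $\omega+\omega^3$ and $4\omega^3$ terms after normalizing by stage lengths) and the residual $1/4$-threshold on stage two; checking that these two optimizations share the maximizer $\omega_{\max}$ and reproduce the stated $\tau_{\max}$ is where the bulk of the careful (but routine) calculation lives.
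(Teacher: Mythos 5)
Your two-part skeleton (achievability via the scheme of Section~\ref{ss::encoding strategy} in the $R_1\to 0$ limit; converse via forcing lists on stage one and bounding the stage-two code) is the same as the paper's, but both halves of your instantiation contain a quantitative error that would prevent the proof from reaching $\tau_{\max}$. On the achievability side, $L_{up}=2$ is structurally insufficient: with $L_{up}=2$, as soon as $\tau_1>\omega-\omega^3$ the scheme falls into the exponential-list regime, so the binding constraints become $\tau\le(1-\alpha)\tau(2)+\alpha(\omega-\omega^2)$ and $\tau\le(1-\alpha)/4+\alpha(\omega-\omega^3)$, and optimizing over $\alpha$ and $\omega$ caps $\tau$ at roughly $0.36$, strictly below $\tau_{\max}\approx 0.44$. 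The extremal formula actually arises from balancing the \emph{list-of-size-three} constraint $\tau\le(1-\alpha)\tau(3)+\alpha(\omega-\omega^3)$, with $\tau(3)=1/2$ taken from Table~\ref{table:1} (not your size-$2$ second-stage code with $\tau(2)=1$, which gives a different and weaker tradeoff), against the exponential-list constraint $\tau\le(1-\alpha)/4+\alpha\omega$ — and the latter is only available when $L_{up}\to\infty$, since that is what pushes the cost of forcing an unbounded list up to $\alpha\omega n$ rather than $\alpha(\omega-\omega^3)n$. These two constraints give $\alpha_{\max}=1/(1+4\omega^3)$ and $\tau=(\omega+\omega^3)/(1+4\omega^3)$. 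But taking $L_{up}\to\infty$ introduces one constraint per list size, and showing all of them are slack at $(\omega_{\max},\alpha_{\max})$ — equivalently, $\tau(L+1)\ge 1/4+\omega_{\max}^{L-2}/4$ for every $L\ge 1$ — is a substantive step your proposal omits: the paper verifies it for small $L$ from the LP values in Table~\ref{table:1} and for large $L$ from Proposition~\ref{prop::construction for symmetric errors} ($\tau(L)\ge L/(4L-2)$). This is not a "routine" optimization; it is where the table of $\tau(M)$ values is genuinely needed.

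The same mis-identification undermines your converse. The single balance you describe — budget spent forcing a huge list versus the $1/4$ threshold of Lemma~\ref{lem::plotkin bound for the Z channel} — yields only $\hat\tau\le(1-\hat\alpha)/4+\hat\alpha\hat\omega$, which Alice defeats by choosing $\hat\alpha$ and $\hat\omega$ close to $1$ (the right-hand side then tends to $1$), so this alone gives no nontrivial bound. The adversary needs \emph{two} strategies, and the scheme must survive both: (i) spend only $\hat\alpha(\hat\omega-\hat\omega^3)n$ to force a list of size $3$ (via Corollary~\ref{cor::constant weight codes}) and then defeat the size-$3$ second-stage code using the \emph{upper} bound $\tau(3)\le 1/2$; and (ii) spend $\approx\hat\alpha\hat\omega n$ to force an arbitrarily large list and invoke $\tau(L)\to 1/4$ from Lemma~\ref{lem::plotkin bound for the Z channel}. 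Only the combination, maximized over Alice's choice of $\hat\alpha$, reproduces $(\hat\omega+\hat\omega^3)/(1+4\hat\omega^3)\le\tau_{\max}$. Two smaller points: the adversary does not "pick the weight $\omega$" — Alice's code determines it, and the adversary attacks a dominant weight class, which the paper formalizes with a Bolzano--Weierstrass subsequence argument; on the other hand, your handling of adaptivity (fix $\y_1$, then apply the $\tau(M)$ bounds to whatever size-$(L+1)$ code Alice uses on stage two, since those bounds hold for any code) is correct and matches the paper.
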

\begin{proof}
To show that the rate $R_{Z}^{(2)}(\tau)$ is positive for all $\tau<\tau_{\max}$, we make use of the encoding strategy described in Section~\ref{ss::encoding strategy}. Note that this algorithm depends on the parameters $L_{up}$, $R_1$ and $\epsilon$, but we let $L_{up}\to\infty$ and $R_1=\epsilon\to 0$ in our further analysis. Recall that by Corollary~\ref{cor::important lower bound}  
$$
\lim\limits_{\epsilon\to 0}\tau^*(\epsilon,L,w)\ge w-w^{L+1}.
$$
Then it is easy to verify that the conditions~\eqref{eq: first cond} and~\eqref{eq: second cond} imply, that there exists a positive-rate two-stage code for any fraction of asymmetric errors less than $\tau_{\max}(w,\alpha)$, where  $\tau_{\max}(w,\alpha)$ is defined as the supremum over $\tau\ge 0$ such that
\begin{equation}\label{eq::system}
		\begin{cases}
			\tau \le  \alpha (w - w^{L}) + (1-\alpha)\tau_{Z}(L),\quad \forall L\ge 1,\\
			\tau \le \alpha w + (1-\alpha)/4.
		\end{cases}
\end{equation}
Define $\tau_{\max}(w):=\sup\{\tau_{\max}(w,\alpha): \ 0< \alpha< 1 \}$. 

	To conclude, we solve the optimization problem~\eqref{eq::system}. We first omit several inequality constraints and find a solution in the relaxed problem. Then we show that the omitted constraints are not violated for the obtained solution. Let $\hat \tau_{\max}(w)$ be the supremum of $\tau\ge 0$ taken over all possible $\alpha$ with $0\le \alpha \le 1$ subject to the first inequality with $L=3$ and the second inequality in~\eqref{eq::system}. Clearly, $\hat \tau_{\max}(w)\ge \tau_{\max}(w)$. Since $\tau_{Z}(3)=1/2$ (cf. Table~\ref{table:1}), we obtain 
	$$
	\hat \tau_{\max}(w)=(w+w^3)/(1+4w^3)
	$$ 
	for $w\ge 1/4$ and the supremum is attained at $\alpha=(1+4w ^3)^{-1}$. For $w\le 1/4$, $\hat \tau_{\max}(w) = 1/4$.   Observe that $\tau_{\max}=\max \{\hat \tau_{\max}(w):\ 0<w<1\}$ by definition~\eqref{eq::tau max} and this maximization is attained at $w=w_{\max}$. However, we are interested in $\sup \{\tau_{\max}(w):\ 0<w< 1\}$. To prove that this supremum equals $\tau_{\max}$, it suffices to prove that for $\alpha = \alpha_{\max}$ and $w=w_{\max}$, all inequalities with $L\neq 3$ in~\eqref{eq::system} are satisfied. In other words,  it remains to show that for all $L\neq 3$, it holds that
	\begin{align}
		\tau_{\max} \le \alpha_{\max}(w_{\max}-w_{\max}^{L})+(1-\alpha_{\max}) \tau_{Z}(L).\label{eq::remains}
	\end{align}
	Recall that $\alpha_{\max}=(1+4w_{\max}^3)^{-1}$ and $\tau_{\max}=\alpha_{\max}(w_{\max}+w_{\max}^3)$. After simple algebraic manipulations, we derive that the inequality~\eqref{eq::remains} is equivalent to
	$$
	\tau_{Z}(L) \ge 1/4 + w_{\max}^{L-3}/4.
	$$
	Using Table~\ref{table:1}, we check the validity of this inequality for small $L\le 10$. For larger $L$, we apply Proposition~\ref{prop::construction for symmetric errors} saying that $\tau_{Z}(L)\ge L/(4L-2)$. Thus, it suffices to check
	$$
	\frac{L}{4L-2} \ge 1/4 + w_{\max}^{L-3}/4. 
	$$
	Note that $w_{\max}\approx 0.661< 2/3$. By simplifying the above inequality, we obtain
	$$
	\frac{1}{2L-1} \ge (2/3)^{L-3}\ge w_{\max}^{L-3}.
	$$
	The latter holds for $L\ge 11$ and, thus, the inequality~\eqref{eq::remains} is correct. This implies that $\tau_{\max} = \tau_{\max}(w_{\max})$.
	Therefore, for any $\tau<\tau_{\max}$ and sufficiently large $n$, the proposed two-stage code with $w=w_{\max}$ and $\alpha=\alpha_{\max}$ has exponential size and corrects a fraction $\tau$ of asymmetric errors.

Now we turn to prove the converse result. Let $\{n^{(i)}\}$, $\{M^{(i)}\}$ and $\{\alpha^{(i)}\}$ be some infinite sequences of integers  and real numbers such that  $0<\alpha^{(i)}<1$, $n^{(i)}\to\infty$ and $\lim \frac{\log M^{(i)}}{n^{(i)}}>0$ as $i\to\infty$. Suppose that there exists  a series of two-stage encoding schemes such that: the $i$th two-stage code of size $M^{(i)}$ and length $n^{(i)}$ corrects a fraction $\tau^{(i)}$ of  asymmetric errors, and the first stage requires $\alpha^{(i)} n^{(i)}$ channel uses. Note that we can find at least $M^{(i)}/(n^{(i)}+1)$ messages that are encoded into strings of the same Hamming weight at the first stage. Clearly, $\lim\limits_{i\to\infty} \frac{\log M^{(i)}/(n^{(i)}+1)}{n^{(i)}}>0$. Thus, we may assume that the $i$th encoder transmits only strings with the normalized Hamming weight $w^{(i)}$ at the first stage for some real number $0\le w^{(i)}\le 1$.
	
	We shall prove that $\lim \sup \tau^{(i)}\le \tau_{\max}$. Toward a contradiction, assume that $\lim \sup \tau^{(i)} = \hat \tau > \tau_{\max}$. By the Bolzano-Weierstrass theorem, there exist real numbers $\hat\alpha$ and $\hat w$ such that for some infinite sequence of indices $\{i_j\}$, we have
	\begin{align*}
		&\lim_{j\to\infty} n^{(i_j)}=\infty,\quad \lim_{j\to\infty} \frac{\log M^{(i_j)}}{n^{(i_j)}}>0,\\
		&\lim_{j\to\infty} w^{(i_j)}=\hat w,\quad \lim_{j\to\infty} \tau^{(i_j)} = \hat \tau,\quad \lim_{j\to\infty} \alpha^{(i_j)}=\hat\alpha .
	\end{align*}
	Lemma~\ref{lem::plotkin bound list decoding} and the positive rate of the encoding schemes yield that the relative $L$-radius of the code used by the $i_j$th encoding scheme at the first stage is at most $\hat w - \hat w^{L+1} + o(1)$ for all $L\ge 1$ as $j\to\infty$. This means that there exist both a codeword of the code used at the first stage and an error pattern with the relative weight $\hat w - \hat w^{L+1} +  o(1)$ such that
	at least $L+1$ codewords are consistent with the output of the channel. To distinguish these $L+1$ possibilities at the second stage, the encoder has to use a code of length $(1-\hat \alpha + o(1))n^{(i_j)}$ and size at least $L+1$ capable of correcting a fraction $(\hat \tau - \hat \alpha (\hat w - \hat w^{L+1} )) / \hat\alpha + o(1)$ of asymmetric errors. However, by Definition~\ref{def::max correctable fraction}, this fraction has to be at most $\tau_{Z}(L+1)$. This implies that
	$$
	\hat \tau\le\hat \alpha(\hat w -\hat w ^{L+1}) + (1-\alpha) \tau_{Z}(L+1),\quad \forall L\ge 1. 
	$$
	By Lemma~\ref{lem::plotkin bound z channel}, $\tau_{Z}(L+1)\to 1/4$ as $L\to \infty$. Since $\hat w^{L+1}\to 0$ as $L\to\infty$, we conclude that $\hat \tau$ also satisfies 
	$$
	\hat \tau \le \hat \alpha \hat w +(1-\hat \alpha)/4.
	$$
	By definition of $\tau_{\max}(w )$ (cf. the system of conditions~\eqref{eq::system}), we conclude that $\hat \tau \le \tau_{\max}(\hat w)\le \tau_{\max}$. This contradiction completes the proof.
    \end{proof}

\section{Conclusion}\label{sec::conclusion}
In this paper, we have discussed two-stage encoding strategies for the Z-channel correcting a fraction of errors. We have proposed an encoding algorithm that uses list-decodable codes on the first stage and high-error low-rate codes on the second stage. This strategy has been shown to correct an optimal fraction of asymmetric errors among all two-stage exponential-sized error-correcting codes.
\section{Acknowledgments}
The authors of this paper are grateful to Amitalok Budkuley, Sidharth Jaggi and Yihan Zhang for the fruitful discussion on the list-decodable codes and for providing a proof for Lemma~\ref{lem::plotkin bound list decoding}.
\bibliographystyle{plain}
\bibliography{ref}
\end{document}